\numberwithin{equation}{section}
\newcommand{\e}{\text{e}}
\newcommand{\bi}{\textbf{i}}
\newcommand{\ci}{i}
\newcommand{\bj}{\textbf{j}}
\newcommand{\bk}{\textbf{k}}
\newcommand{\di}{\text{d}}
\newcommand{\mbb}{\mathbb}
\newcommand{\conj}{\text{ad}}
\newtheorem{mydef}{Definition}
\numberwithin{mydef}{section}
\newtheorem{mythm}{Theorem}
\numberwithin{mythm}{section}
\newtheorem{mylemma}{Lemma}
\numberwithin{mylemma}{section}
\numberwithin{mycor}{section}
\newtheorem{myprop}{Proposition}
\numberwithin{myprop}{section}
\newenvironment{sciabstract}{%
\begin{quote} \bf}
{\end{quote}}
\title{Frustrated conformal transformations and holomorphic maps on ambitwistor space}
\author{Edward B. Baker III\thanks{edwardbaker86@gmail.com}}
\affil{\textit{\small{Institute for Physical Science and Technology, University of Maryland, College Park}}}
\date{\today}
\begin{document}

\maketitle

\begin{sciabstract}
We study some aspects of conformal transformations in the context of twistor theory, leading to the definition of a \textit{frustrated conformal transformation}.  This equation relies on two instantons for the left and right copies of $Sp(1)$, one being self-dual and the other anti-self-dual.  Solutions to this equation naturally generate maps on ambitwistor space due to an analog of the Penrose-Ward correspondence.  A solution based on the BPST instanton is presented.
\end{sciabstract}

\section{Introduction}

Due to the success of the theory of complex analysis and complex manifolds in two dimensions, it is natural to search for an analog of holomorphicity in higher dimenions, which may be different than holomorphicity of multiple complex variables.  If one uses equivalence with conformal transformations to try to achieve this goal, there is an obstacle due to Liouville's theorem, which states that conformal transformations are highly restrictive in dimension greater than two.  One way of attempting to bypass this restriction is to search for generalizations of different definitions of holomorphicity to higher dimensions, which may not necessarily be conformal maps.  An early attempt to achieve this goal was made by Fueter \cite{Fueter}\cite{Sudbery}, related to the original work by Hamilton \cite{hamilton1866elements}, which introduced a notion of quaternion holomorphicity.   The Cauchy-Riemann-Fueter equation bears a clear resemblance to the Cauchy-Riemann equations, but is lacking in a number of respects.  One important property that is lacking is that these functions are not closed under composition or multiplication, and therefore do not form a group of transformations.  For this reason and others, the theory of quaternion regular functions is not an adequate generalization of holomorphicity to four dimensions.

Despite its shortcomings, this approach has been formalized and extended in a number of directions.  Dominic Widdows studied quaternion valued forms on hypercomplex manifolds \cite{alggeom}, generalizing the tools of algebraic geometry to this context, which was related to previous work by Dominic Joyce \cite{Joyce98hypercomplexalgebraic}.  Quaternionic coordinates and maps were also introduced for a restrictive definition of quaternionic manifolds \cite{kulkarni1978}, which was later generalized to a class of quaternionic manifolds without reference to a specific coordinate system \cite{Salamon1986}.  These results are related to the study of manifolds with special holonomy, which has applications to string theory and topology \cite{ellingsrud2002calabi}. 
 
 A powerful way of understanding the framework of four dimensional geometry is through twistor theory \cite{Penrose197765}\cite{ward1991twistor}\cite{dunajski2010solitons}, and the related study of self-duality \cite{atiyah1979geometry}\cite{10.2307/79638}.  In particular, the twistor space of a four dimensional quaternionic manifold is an integrable complex manifold, and conformal maps on this manifold induce holomorphic maps on its twistor space.  Quaternionic manifolds in four dimensions have a self-dual conformal structure, and so the integrability of the twistor space is related to the self-duality of the underlying base space.  These and other relationships between conformal transformations, self-duality, and integrable complex structures provide powerful ways of understanding how holomorphicity manifests itself in higher dimensions.  

In this paper we analyze some aspects of these ideas, introducing some new tools and perspectives along the way.  This analysis leads to the definition of a \textit{frustrated conformal transformation}, whose construction relies on a pair of self-dual and anti-self-dual $Sp(1)$ connections, which are in some sense dual to each other.  Through a construction analogous to the Penrose-Ward correspondence, these functions generate holomorphic maps of ambitwistor space, which  were introduced in the study of general solutions to the Yang-Mills equations \cite{WITTEN1978394}\cite{ISENBERG1978462}.  These maps are naturally closed under composition, and provide a different possible notion of generalized holomorphic functions in four dimensions.   A special case based on the BPST instanton \cite{BELAVIN197585} is discussed.

\section{Quaternionic embeddings}
\subsection{Quaternion projective space}
We begin our discussion by analyzing the classic example of twistor space over the quaternion projective line $\mathbb{H}\mathbb{P}^1$. $\mathbb{H}\mathbb{P}^1$ can be defined by homogeneous coordinates $(q_1,q_2)$ defined up to left multiplication by $u\in\mbb{H}^*$.\footnote{We use equivalence under left multiplication for notational reasons, but the opposite convention is related by a change in orientation.} This space can be understood in terms of complex numbers through the fibration $\pi_s:\mathbb{C}\mathbb{P}^3\rightarrow \mathbb{H}\mathbb{P}^1$, defined by 
\begin{equation}\label{eq:fibration}
\pi_s:(z_1,z_2,z_3,z_4)\rightarrow(z_1+z_2 \bj ,z_3+z_4 \bj),
\end{equation}
where $(z_1,z_2,z_3,z_4)$ are homogeneous coordinates on $\mathbb{C}\mbb{P}^3$, and the imaginary unit of the variables $z_i$ is identified with the quaternion $\bi$.  This is the projection map of a fibre bundle where the fibres are copies of $S^2$, and was used extensively in the characterization of instantons on the four sphere, which is the prototype for the ADHM construction \cite{atiyah1979geometry}\cite{ATIYAH1978185}.  In this case, $\mbb{C}\mbb{P}^3$ can be viewed as the twistor space of $\mbb{H}\mbb{P}^1$.  

It will prove useful to locally view the twistor space $\mbb{C}\mbb{P}^3$ as a subset of the product space $\mbb{H}\otimes \mbb{I}_u$, where $\mbb{I}_u=\{q\in\mbb{H}:\lvert q\rvert =1, \text{Re}(q)=0\}$ denotes the unit imaginary quaternions.  To see this, consider the coordinate chart in $\mbb{C}\mbb{P}^3$ defined by $z_4=1$, and the chart in $\mbb{H}\mbb{P}^1$ defined by $q_2=1$.  In these charts, the differential of the map \eqref{eq:fibration} becomes 
\begin{equation*}
\di \pi_s(z_1,z_2,z_3)=-(z_3+\bj)^{-1}\di z_3 (z_3+\bj)^{-1}(z_1+z_2 \bj)+(z_3+\bj)^{-1}(\di z_1+\di z_2 \bj),
\end{equation*}  
where the tangent space $T_p\mbb{H}$ is identified with $\mbb{H}$. The complex structure on $\mbb{C}\mbb{P}^3$ acts on the differentials $\di z_i$ through multiplication by $\bi$, which is equivalent to left multiplication of the differential by $\eta(z_3)$, where $\eta:\mbb{C}\rightarrow \mbb{I}_u$ is defined by 
\begin{equation}\label{eq:stereographic}
\eta(z)=(z+\bj)^{-1}\bi(z+\bj),
\end{equation}
and can be identified with a complex structure on $\mbb{H}$. Identifying $\mbb{I}_u$ with the Riemann sphere, the function $\eta(z)$ gives a stereographic projection, which can be readily verified.  We are therefore motivated to define the local trivialization $\phi:U\rightarrow V$ for $U\in \mbb{C}\mbb{P}^3$ and $V\in\mbb{H}\otimes\mbb{I}_u$ given by
\begin{equation}\label{eq:phitriv}
\phi:(z_1,z_2,z_3)\rightarrow\Big((z_3+\bj)^{-1}(z_1+z_2\bj),\eta(z_3)\Big),
\end{equation}
in the chosen coordinate patches.  If one instead chooses coordinates such that $q_1=1$, then the  transition function takes the form
\begin{equation}\label{eq:qptransition}
\tau:(q,\eta)\rightarrow (q^{-1},q^{-1}\eta q),
\end{equation}
which is analogous to the transition function $z\rightarrow z^{-1}$ of the Riemann sphere, with an additional rotation on the fibre. Calculating the differential of $\eta(z_3)$,
\begin{equation*}
\di \eta(z_3) = (z_3+\bj)^{-1}[\bi,\di z_3 (z_3+\bj)^{-1}](z_3+\bj),
\end{equation*} 
it is clear that the induced complex structure on the fibre is also given by left multiplication by $\eta(z_3)$.

This example motivates us to consider the space $\mbb{H}\otimes \mbb{I}_u$ as a natural coordinate system for the twistor space of a four-dimensional quaternionic manifold.  As is the case for any quaternionic manifold, the twistor space has an induced complex structure, which is embodied in this coordinate system through left multiplication of the tangent space by $\eta\in\mbb{I}_u$.  In the case considered above, the transition functions for these coordinates can be visualized with the diagram
\begin{equation} \label{eq:comm1}
\begin{tikzcd}
U_1\in\mbb{C}\mbb{P}^3 \arrow[r, "\hat{\tau}"] \arrow[d, "\phi"']
& U_2\in\mbb{C}\mbb{P}^3 \arrow[d, "\phi"] \\
V_1\in\mathbb{H}\otimes\mathbb{I}_u \arrow[r, "\tau"]
& V_2\in\mbb{H}\otimes\mbb{I}_u
\end{tikzcd}
\end{equation}
where $\phi$ is defined by equation \eqref{eq:phitriv}, and $\hat{\tau}$ is holomorphic.  The demand that $\tau$ is holomorphic with respect to the defined complex structure is equivalent to demanding that the diagram commutes, which follows from the fact that the other maps are also holomorphic. This picture will be useful in the analysis that follows.

\subsection{M{\"o}bius transformations and holomorphic embeddings}
To gain some more familiarity with quaternions, we first consider how M{\"o}bius transformations can be represented with the identification of $\mbb{C}_\infty$ with $\mbb{I}_u$.    In homogeneous coordinates, M{\"o}bius transformations take the form $\mathcal{M}:(z_1,z_2)\rightarrow (az_1+bz_2,cz_1+dz_2)$ with $ad\neq bc$. Define the map $\eta:\mbb{C}\mbb{P}^1\rightarrow \mbb{I}_u$ by
\begin{equation}\label{eq:stereographic2}
\eta(z_1,z_2)=(z_1+z_2\bj)^{-1}\bi(z_1+z_2\bj),  
\end{equation}
which reduces to equation \eqref{eq:stereographic} in the coordinate patch $z_2=1$.
Applying this map to both sides of $\mathcal{M}$, we find that the M{\"o}bius transformation takes the form 
\begin{equation}\label{eq:etamob}
\mathcal{M}_{\alpha,\beta}:\eta\rightarrow (\alpha+\eta\beta)^{-1}\eta(\alpha+\eta\beta),
\end{equation}
where $2\alpha=a+c\bj+(\bj b+d)^*$ and $2\bi\beta=a+c\bj-(\bj b+d)^*$.  
In this form, the non-singular condition becomes $\beta\alpha^{-1}\not\in\mbb{I}_u$. Furthermore, the invariance of $a, b, c, d$ under simultaneous multiplication by $u\in\mbb{C}^*$ translates into a scaling and rotational symmetry between $\alpha$ and $\beta$, and the rotation subgroup consists of transformations with $\beta=0$. 

In fact, this representation of the M{\"o}bius group is realized as a subgroup of the ``biquaternions'', denoted $\mbb{H}_\mbb{C}$, where $\alpha$ and $\beta$ are identified with the real and imaginary parts of a complexified quaternion. The restricted Lorentz group $SO^+(1,3)$, which is isomorphic to the M{\"o}bius group $PSL(2, \mbb{C})$, can be embedded in the biquaternions, explaining this relationship \cite{0143-0807-5-1-007}.   In this context, the demand $\beta\alpha^{-1}\not\in\mbb{I}_u$ is equivalent to requiring that $\alpha+\ci\beta$ is not a zero divisor.  In the biquaternion representation, $SO^+(1, 3)$ is realized as a pair $\alpha, \beta \in \mbb{H}$ satisfying
\begin{equation*}
(\alpha+\ci\beta)(\bar{\alpha}+\ci\bar{\beta})=1,
\end{equation*}
where here $\ci$ is the unit of complexification for the biquaternions. The action of this group on a
``Minkowskian quaternion'' $q_m$ of the form $q_m=x_0+\ci(\bi x_1+\bj x_2 + \bk x_3)$ is given by
\begin{equation*}
q_m\rightarrow (\alpha+\ci\beta)q_m(\bar{\alpha}-\ci\bar{\beta}),
\end{equation*}
which is analogous to equation \eqref{eq:etamob}.   
 
 In this context, the Lie algebra of $SO^+(1, 3)$ corresponds to quaternions of the form $\gamma+\ci\delta$ where $\gamma$ and $\delta$ are pure imaginary. With an eye towards analyzing the transition map $\tau$ from diagram \eqref{eq:comm1}, we can consider a continuously varying M{\"o}bius transformation on the space $\mbb{H}$, for which the differential of the M{\"o}bius transformation can be written
\begin{equation}\label{eq:moblie}
(\bar{\alpha}+\ci \bar{\beta})(\di \alpha + \ci \di \beta)=\gamma+\ci\delta.
\end{equation}
This defines $\gamma,\delta\in\Gamma(\mbb{I}\otimes \Lambda^1)$, where $\Lambda^1$ is the space of one forms and $\mbb{I}$ are the imaginary quaternions.
Working out the action of the generators on $q_m$, we find that $\gamma$ is the generator of rotations and $\ci\delta$ is the generator of boosts. The relationships between these ideas will be developed further in the following sections.

It will also be relevant to consider embeddings of the Riemann sphere into the space $\mbb{C}\mbb{P}^3$. In the context of the diagram \eqref{eq:comm1}, these maps will correspond to transition functions $\tau:\mbb{H}\otimes\mbb{I}_u\rightarrow\mbb{H}\otimes\mbb{I}_u$ at fixed $q$. To this end, consider a degree one embedding $\iota:\mbb{C}\mbb{P}^1\hookrightarrow\mbb{C}\mbb{P}^3$.   In homogeneous coordinates, these linear transformations can be written similarly to M{\"o}bius transformations
\begin{equation}\label{eq:cp1tocp3}
(z_1,z_2)\rightarrow (az_1+bz_2,cz_1+dz_2,\tilde{a}z_1+\tilde{b}z_2,\tilde{c}z_1+\tilde{d}z_2).
\end{equation}   
Applying equations \eqref{eq:stereographic2} to the domain and \eqref{eq:phitriv} to the range of this embedding, this can be rewritten 
\begin{equation} \label{eq:tline}
\tau_q(\eta)=\Big(h+\mathcal{M}_{\alpha,\beta}(\eta)\rho,\mathcal{M}_{\alpha,\beta}(\eta)\Big).
\end{equation}
The details of this calculation are presented in appendix \ref{sec:emriemann}, along with explicit formulas for $h,\rho\in\mbb{H}$ as a function of the embedding parameters.  Through this calculation, we see that a line in the context of algebraic geometry more closely resembles a sphere in this coordinate system, where $h$ is the center of the sphere and $\rho$ is a generalized quaternionic radius.  

For the following analysis, we define 
\begin{mydef}
Given a domain $U\in\mbb{H}$, a map $\tau:U\otimes \mbb{I}_u\rightarrow \mbb{H}\otimes \mbb{I}_u$ is twistor conformal if at fixed $q\in U$, the map restricts to a degree one holomorphic embedding of the form \eqref{eq:tline}.
\end{mydef}
These maps will turn out to be somewhat trivial, given by generalized M{\"o}bius transformations.  They will, however, lead to a nontrivial generalization, and therefore merit a close investigation.

\section{Twistor conformal maps}
\subsection{Analysis on quaternions}\label{sec:qanalysis}
We will now consider the transition functions $\tau$ from diagram \eqref{eq:comm1} at fixed $\eta$, which are holomorphic maps $\tau_\eta:\mbb{H}\rightarrow\mbb{H}\otimes\mbb{I}_u$, or 
\begin{equation}\label{eq:taueta}
\tau_\eta:q\rightarrow \big(f_\eta(q),m_\eta(q)\big),
\end{equation} 
We will want to demand that these functions are holomorphic with respect to the complex structures defined. To understand this more clearly, we first develop some useful tools in quaternionic analysis with a preferred complex structure.

The differential of a quaternion valued function $f(q)$ can be written
\begin{equation*}
\di f = \partial_{\bar{q}} \di \bar{q} f + \di q\partial_{q}f
\end{equation*}
where we have defined the derivative and one form
\begin{equation*}
\partial_q = \frac{1}{2}(\partial_0-\bi\partial_1-\bj\partial_2-\bk\partial_3), \ \ \di q = \di x_0+\bi\di x_1+\bj \di x_2+\bk \di x_3.
\end{equation*}
This expression for the differential is analogous to the formula $\di=\di z \partial_z+\di\bar{z}\partial_{\bar{z}}$ familiar from complex analysis.
The following identities will be useful for our analysis
\begin{align}
\text{Re}(q)=\frac{1}{4}\sum_{\mu} \e_\mu q \bar{\e}_{\mu}, \ \ \ \bar{q}=-\frac{1}{2}\sum_\mu \e_\mu q \e_\mu, \label{eq:qident12}
\end{align}
where we have identified $\e_\mu=\{1,\bi,\bj,\bk\}$ for $\mu\in [0,3]$. Using these identities, the differential can be rewritten 
\begin{equation} \label{eq:quatdif}
\di f = \frac{1}{2}\sum_\mu \e_\mu\di q\partial_q \bar{\e}_\mu f.
\end{equation}
 With this differential, we have implicitly identified the tangent space of the target space with the quaternions, so the complex structure acts on the target space through left multiplication by $m_\eta(q)$.  The complex structure acts on the differential $\di q$ through
\begin{equation*}
I_\eta^r(\di q)=\eta \di q.
\end{equation*}
which acts on the cotangent space, and so according to our conventions should act on the right, which is the reason for the superscript. Using the complex structure of the target space, a function that is holomorphic on these coordinates satisfies
\begin{equation}\label{eq:fholdef}
I_\eta^r(\di f)=m_\eta \di f.
\end{equation}
This equation can be understood as requiring that $\di f$ is contained in the holomorphic
cotangent space of $\mbb{H}$ for the given complex structure. For example, in the case $\eta=m_\eta(q)=\bi$, we find $\partial_q \bj f=\partial_q \bk f=0$. 
Writing $q=z_1+z_2\bj$ and $f_{\bi}(q)=f_1(z_1,z_2)+f_2(z_1,z_2)\bj$, this reduces to the demand that $f_1(z_1,z_2)$ and $f_2(z_1,z_2)$ are holomorphic with respect to $z_1$ and $z_2$.  In the more general case there is a similar rewriting, but equation \eqref{eq:fholdef} is the most invariant way of writing the condition.

We will also be interested in properties of quaternion valued two forms.  A short calculation shows
\begin{align*}
&\di q\wedge \di \bar{q}=-2\sum_i \e_i\left(\di x_0\wedge \di x_i+\sum_{jk}\epsilon_{ijk}\di x_j\wedge\di x_k\right),\\
&\di q\wedge \e_i\di \bar{q}=2\di x_0\wedge \di x_i -2\sum_{jk}\epsilon_{ijk}\di x_j\wedge\di x_k,
\end{align*}
with similar equalities for the opposite ordering.  In this way, we see that $\di q\wedge \di \bar{q}$ is self-dual and $\di q\wedge \e_i\di \bar{q}$ is real and anti-self-dual under the hodge-star operation (with the volume form $\di x_0\wedge \di x_1\wedge \di x_2\wedge \di x_3$). Simiarly $\di \bar{q}\wedge \di q$ is anti-self dual and $\di \bar{q}\wedge \e_i\di q$ is self-dual. 
Using the above results in addition to equation \eqref{eq:qident12}, one finds the useful identity
\begin{equation}
\di q \wedge w\di q = -\frac{1}{2}\Big(\bar{w}\di \bar{q}\wedge \di q+\di q \wedge \di \bar{q} \bar{w}\Big) \ \ \ \forall w\,\in\mbb{H}_{\mbb{C}}. \label{eq:qident3}
\end{equation} 
There are also variants of these identities for the different orderings of $\di q$, $\di \bar{q}$ and $w$.

\subsection{Conformal coordinate transformations}\label{sec:cco}

For a twistor conformal map, the map $m_\eta(q) : \mbb{H} \rightarrow \mbb{I}_u$ defined in equation \eqref{eq:taueta} is a M{\"o}bius transformation. Therefore, in accordance with equation \eqref{eq:etamob}, we introduce quaternion valued functions $\alpha(q),\beta(q)$ such that $m_\eta(q) = \mathcal{M}_{\alpha,\beta}(\eta)$. We want this map to be holomorphic for all $\eta$, which implies
\begin{equation*}
I^r_\eta(\di m_\eta)=m_\eta \di m_\eta.
\end{equation*}
Using equation \eqref{eq:etamob}, we can write the differential in the form
\begin{equation}\label{eq:meq}
\di m_\eta=(\alpha+\eta\beta)^{-1}[\eta,(\di \alpha +\eta \di \beta)(\alpha+\eta\beta)^{-1}](\alpha+\eta\beta),
\end{equation}
with which the condition of holomorphicity can be rewritten
\begin{equation}\label{eq:holcond1}
I^r_\eta([\eta,\gamma+\eta\delta])=\eta[\eta,\gamma+\eta\delta],
\end{equation}
where $\gamma$ and $\delta$ are defined by equation \eqref{eq:moblie}.  In this way, the biquaternion representation of $\mathfrak{s}\mathfrak{o}(1,3)$ appears naturally in the differential of the map $\tau$ at fixed $\eta$.

To understand this condition for all $\eta$, it will be useful to introduce some notation. First, define the operator
\begin{equation*}
\mathcal{I}_\eta=\eta-I^r_\eta,
\end{equation*}
which projects onto the holomorphic cotangent space for $\omega\in\Gamma(\mathbb{H}_{\mbb{C}}\otimes \Lambda^1)$, and complex structure defined by $\eta$.  The Lie algebra generated by these operators was studied by Widdows \cite{alggeom}, leading to a definition of the ``q-holomorphic'' cotangent space.\footnote{In that context, the cotangent space $\mbb{H}\otimes \Lambda^1$ was being considered, and the conventions used here are opposite regarding the left and right $\mbb{H}$ action.} We also define the operators
\begin{equation*}
\pi_\eta^{\pm}=\ci\pm\eta ,
\end{equation*}
which project onto the $\pm\ci$ eigenspaces of multiplication by $\eta$, and satisfy the identities  
\begin{equation*}
\pi_\eta^\pm\pi_\eta^\pm=2\ci \pi_\eta^\pm, \ \ \pi_\eta^+\pi_\eta^-=0.
\end{equation*}
In the context of twistor theory, these operators are related to coordinates $\pi_A$ on twistor space, which is the reason for the similar notation.  This relationship will be developed further.

A short calculation shows that equation \eqref{eq:holcond1} can be written
\begin{equation}\label{eq:holcond1b}
\mathcal{I}_\eta \pi_\eta^+[\eta,\bar{\varphi}\di\varphi ]=0,
\end{equation}
where we have defined 
\begin{equation*}
\bar{\varphi} = \alpha+\ci \beta,
\end{equation*}
and used the definition \eqref{eq:moblie} in the form $\bar{\varphi}\di \varphi = \gamma+\ci\delta$.   Here, the quaternion conjugation is included for agreement with conventions in twistor theory.
This way of writing the equation manifests the different projections that are being applied to the differential form before setting the result to zero. We will want this to be true for all $\eta$, which will define a subspace of the biquaternion valued cotangent space.

In order to find the most general solution for all $\eta$, we first focus on the simultaneous projections of $[\eta,\cdot]$ and $\mathcal{I}_\eta$. To this end, we consider the Lie algebra consisting of elements
\begin{equation*}
\mathcal{J}_\eta(\omega)=\eta\omega-\omega\eta-I^r_\eta(\omega).
\end{equation*}
 The solutions to equation \eqref{eq:holcond1b} that are true for all $\eta$ must transform invariantly under the Lie group generated by $\mathcal{J}_\eta$. Decomposing this into irreducible components\footnote{Here we use physics notation for these representations. More background for this technique can be found in \cite{alggeom}, where the mathematics notation is used.}, we find $\frac{\mathbf{1}}{\mathbf{2}}\otimes\frac{\mathbf{1}}{\mathbf{2}}\otimes\frac{\mathbf{1}}{\mathbf{2}}=\frac{\mathbf{3}}{\mathbf{2}}\oplus 2\cdot \frac{\mathbf{1}}{\mathbf{2}}$.  
In this case the relevant subspace is given by the two spin half representations. To see this, consider $\omega\in \Gamma(\mbb{H}_{\mbb{C}}\otimes T^*\mbb{H})$ of the form
\begin{equation*}
\omega=\di q\xi_1-\bar{\xi}_2 \di \bar{q},
\end{equation*}
for $\xi_1,\xi_2\in\mathbb{H}_{\mbb{C}}$.\footnote{In general, for biquaternions we will denote $\bar{\xi}$ as the ``quaternion conjugate'' and $\xi^*$ as the conjugate with respect to $\ci$.} One can readily check that choosing $\bar{\varphi}\di\varphi=\omega$ satisfies equation \eqref{eq:holcond1b}, and additionally this subgroup transforms invariantly under the Lie group.  Note that the normalization condition $\varphi\bar{\varphi}=1$ implies $\xi_1=\xi_2$.

This formalism can also be applied to the function $f_\eta(q)$ from the map \eqref{eq:taueta}, which is equal to $f_\eta(q) = h(q) + m_{\eta}(q)\rho(q)$, where $h(q)$ and $\rho(q)$ are defined in equation  \eqref{eq:tline}. Defining 
 \begin{equation*}
\kappa = h+\ci \rho,
\end{equation*}  
and using equation \eqref{eq:fholdef}, we find 
\begin{equation}\label{eq:holcond}
\mathcal{I}_\eta \pi_\eta^+\bar{\varphi} \,\di \kappa=0.
\end{equation}
With this form, we can use the same argument to show that $\di \kappa$ transforms invariantly under the Lie algebra generated by $\mathcal{I}_\eta$. In this case, addition of angular momentum gives $\frac{\mathbf{1}}{\mathbf{2}}\otimes\frac{\mathbf{1}}{\mathbf{2}}=\mathbf{1}\oplus \mathbf{0}$,
and the relevant solution is given by the $\mathbf{0}$ representation. This implies that
\begin{equation}\label{eq:kappaeq}
\bar{\varphi}\di \kappa = \di q \nu,
\end{equation}
for $\nu\in \mbb{H}_{\mbb{C}}$.

To clean up the notation, define the M{\"o}bius transformation $\mathcal{M}_{\varphi}$ for $\varphi \in \mathbb{H}_{\mbb{C}}$ as
\begin{equation*}
\mathcal{M}_{\varphi}\equiv \mathcal{M}_{\Re(\varphi),\Im(\varphi)}.
\end{equation*}
We then summarize the above results with
\begin{mythm}\label{tcon}
A map $\tau$ is twistor conformal if and only if it takes the form 
\begin{equation}\label{eq:tconform}
\tau(q,\eta)=\big(\Re(\kappa)+\mathcal{M}_{\bar{\varphi}}(\eta)\Im(\kappa),\mathcal{M}_{\bar{\varphi}}(\eta)\big), \ \ \ q\in\mbb{H}, \eta\in\mbb{I}_u,
\end{equation}
where $\varphi\bar{\varphi}=1$ and $\kappa(q),\varphi(q)\in\mbb{H}_{\mbb{C}}$ satisfy the conditions  
\begin{align}
&\di \kappa =\varphi \di q \nu, \label{eq:tconcond1} \\
&\bar{\varphi} \di \varphi =\frac{1}{2}(\di q \xi -\bar{\xi}\di\bar{q}), \label{eq:tconcond2}
\end{align}
for $\nu(q),\xi(q)\in\mbb{H}_{\mbb{C}}$.  Here $\bar{\varphi}$ denotes the quaternionic conjugate of $\varphi$, and $\Re$, $\Im$ denote the real and imaginary parts with respect to $\ci$.
\end{mythm}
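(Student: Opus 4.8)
The plan is to prove both implications, reading the statement as a repackaging of the holomorphicity analysis above into the normalized biquaternion variables $\varphi$ and $\kappa$. For the ``only if'' direction I would start from the definition: at fixed $q$ the map restricts to a degree-one holomorphic embedding, so by \eqref{eq:tline} it has the shape $\tau_q(\eta)=(h(q)+\mathcal{M}_{\alpha,\beta}(\eta)\rho(q),\mathcal{M}_{\alpha,\beta}(\eta))$. Introducing $\bar\varphi=\alpha+\ci\beta$ and $\kappa=h+\ci\rho$ rewrites this as \eqref{eq:tconform}, with $m_\eta=\mathcal{M}_{\bar\varphi}(\eta)$ and $f_\eta=\Re(\kappa)+m_\eta\Im(\kappa)$. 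The scaling and rotational freedom between $\alpha$ and $\beta$ noted after \eqref{eq:etamob} lets me rescale $\bar\varphi\to\lambda\bar\varphi$ by $\lambda\in\mbb{C}^*_{\ci}$ without changing $m_\eta$ (since $\lambda$ commutes with $\eta$ and cancels in the conjugation), and this freedom can be used to impose $\varphi\bar\varphi=1$, matching the normalization in the theorem.

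It then remains to extract the constraints \eqref{eq:tconcond1} and \eqref{eq:tconcond2} from holomorphicity of $\tau$ in $q$. Here I would impose \eqref{eq:fholdef}, $I^r_\eta(\di f)=m_\eta\di f$, separately on the two components for every $\eta$. For $m_\eta$ this is precisely the computation leading to \eqref{eq:meq} and then \eqref{eq:holcond1b}, namely $\mathcal{I}_\eta\pi_\eta^+[\eta,\bar\varphi\di\varphi]=0$; for $f_\eta=\Re(\kappa)+m_\eta\Im(\kappa)$, differentiating and using the same manipulation yields \eqref{eq:holcond}, $\mathcal{I}_\eta\pi_\eta^+\bar\varphi\di\kappa=0$. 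The content of the theorem is that these two equations, demanded for \emph{all} $\eta$, are equivalent to $\bar\varphi\di\varphi$ having the form $\di q\,\xi_1-\bar\xi_2\di\bar q$ and to $\bar\varphi\di\kappa=\di q\,\nu$; the latter becomes \eqref{eq:tconcond1} after left multiplication by $\varphi$ and use of $\varphi\bar\varphi=1$.

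The main obstacle is the ``for all $\eta$'' step, which I would handle by the representation-theoretic argument sketched in the text. The point is that requiring $\mathcal{I}_\eta\pi_\eta^+(\cdot)=0$ for every $\eta\in\mbb{I}_u$ forces the relevant form into a subspace invariant under the Lie group generated by $\mathcal{J}_\eta$ (respectively $\mathcal{I}_\eta$), since rotating $\eta$ over $\mbb{I}_u\cong S^2$ acts irreducibly. I would make this action explicit on $\Gamma(\mbb{H}_{\mbb{C}}\otimes\Lambda^1)$, decompose into irreducibles ($\tfrac{\mathbf 1}{\mathbf 2}\otimes\tfrac{\mathbf 1}{\mathbf 2}\otimes\tfrac{\mathbf 1}{\mathbf 2}=\tfrac{\mathbf 3}{\mathbf 2}\oplus 2\cdot\tfrac{\mathbf 1}{\mathbf 2}$ for $\bar\varphi\di\varphi$, and $\tfrac{\mathbf 1}{\mathbf 2}\otimes\tfrac{\mathbf 1}{\mathbf 2}=\mathbf 1\oplus\mathbf 0$ for $\bar\varphi\di\kappa$), and verify directly that the candidate forms $\di q\,\xi_1-\bar\xi_2\di\bar q$ (the two spin-$\tfrac12$ pieces) and $\di q\,\nu$ (the spin-$0$ piece) are annihilated while the complementary components are not. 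The normalization $\varphi\bar\varphi=1$ then forces $\xi_1=\xi_2$: differentiating $\varphi\bar\varphi=1$ and using $\di\varphi=\varphi(\bar\varphi\di\varphi)$ gives $\bar\varphi\di\varphi+\overline{\bar\varphi\di\varphi}=0$, which collapses the two spin-$\tfrac12$ copies to the single form in \eqref{eq:tconcond2}. The delicate part is keeping the two biquaternion conjugations $\bar{\,\cdot\,}$ and ${}^*$ straight while confirming that no other irreducible component can contribute for generic $\eta$.

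For the ``if'' direction I would reverse these implications. At fixed $q$, the expression \eqref{eq:tconform} is by inspection of the form \eqref{eq:tline}, with center $\Re(\kappa)$, generalized radius $\Im(\kappa)$, and M\"obius map $\mathcal{M}_{\bar\varphi}$, hence a degree-one holomorphic embedding, so the definition is met. Substituting the constraints \eqref{eq:tconcond1} and \eqref{eq:tconcond2} back into the differentials of $f_\eta$ and $m_\eta$ reproduces \eqref{eq:holcond} and \eqref{eq:holcond1b}, which is exactly the statement that $\tau_\eta$ is holomorphic in $q$ for each $\eta$. This shows that any map of the asserted form is twistor conformal and completes the equivalence.
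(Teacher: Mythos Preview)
Your proposal is correct and follows essentially the same route as the paper: the theorem is explicitly presented there as a summary of the preceding analysis in Section~\ref{sec:cco}, and you have reproduced that analysis step by step---introducing $\bar\varphi=\alpha+\ci\beta$ and $\kappa=h+\ci\rho$, reducing holomorphicity to \eqref{eq:holcond1b} and \eqref{eq:holcond}, and then using the $\mathcal{J}_\eta$/$\mathcal{I}_\eta$ representation-theoretic decomposition to isolate the spin-$\tfrac12$ and spin-$0$ pieces. Your treatment actually spells out two points the paper only asserts: that the $\mbb{C}^*_\ci$ scaling freedom in $(\alpha,\beta)$ is what permits the normalization $\varphi\bar\varphi=1$, and that differentiating this normalization is what collapses $\xi_1=\xi_2$.
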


It turns out that twistor conformal maps are very restrictive, which is clear from the following theorem.

\begin{myprop}\label{tconfprop}
Twistor conformal functions are M{\"o}bius transformations on the space $\mbb{H}_{\mbb{C}}$, taking the form
\begin{equation}\label{eq:bqmob}
\kappa(q)=(\alpha q+\beta)(\gamma q+\delta)^{-1},
\end{equation}
for $q,\alpha,\beta,\gamma,\delta\in\mbb{H}_{\mbb{C}}$.
\end{myprop}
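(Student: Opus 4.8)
The plan is to show that the differential system \eqref{eq:tconcond1}--\eqref{eq:tconcond2} of Theorem \ref{tcon} is equivalent to $\kappa$ being a biquaternionic M\"obius transformation. One inclusion is a direct verification: for $\kappa=(\alpha q+\beta)(\gamma q+\delta)^{-1}$ with constant $\alpha,\beta,\gamma,\delta\in\mbb{H}_{\mbb{C}}$ one computes
\begin{equation*}
\di\kappa=(\alpha-\kappa\gamma)\,\di q\,(\gamma q+\delta)^{-1},
\end{equation*}
which is manifestly of the form $\varphi\,\di q\,\nu$ with $\varphi\propto\alpha-\kappa\gamma$ and $\nu\propto(\gamma q+\delta)^{-1}$; after rescaling $\varphi$ to unit norm by a central factor and checking that $\bar\varphi\,\di\varphi$ acquires the shape \eqref{eq:tconcond2}, this exhibits the M\"obius map as twistor conformal. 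The substance of the proposition is the converse rigidity statement, so I focus there.

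For the converse I would reduce the claim to an affineness statement. Set $\psi:=\nu^{-1}$ and $\chi:=\kappa\psi$, so that $\kappa=\chi\psi^{-1}$. A biquaternion-valued function $g$ has the left-affine form $g=cq+d$ with $c,d$ constant if and only if $\di g=c\,\di q$ for a constant $c$; hence it suffices to prove that $\di\psi$ and $\di\chi$ are constant left multiples of $\di q$. Writing $B:=\di\nu\,\nu^{-1}$ and using \eqref{eq:tconcond1}, a short computation gives
\begin{equation*}
\di\psi=-\psi B,\qquad \di\chi=\varphi\,\di q-\chi B,
\end{equation*}
so that everything is governed by the single one-form $B$ together with $\varphi$.

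Next I would extract $B$ from the integrability of the system. Rewriting \eqref{eq:tconcond2} as $\di\varphi=\varphi A$ with $A=\tfrac12(\di q\,\xi-\bar\xi\,\di\bar q)$, imposing $\di^2\kappa=0$ on \eqref{eq:tconcond1}, and using $\varphi\bar\varphi=1$, yields the purely algebraic two-form identity
\begin{equation*}
A\wedge\di q=\di q\wedge B.
\end{equation*}
I would then decompose both sides into self-dual and anti-self-dual parts using Section~\ref{sec:qanalysis}, in particular the facts that $\di q\wedge\di\bar q$ is self-dual while $\di\bar q\wedge\di q$ is anti-self-dual, together with the reordering identity \eqref{eq:qident3}. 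Matching components determines $B$ in terms of $\xi$ and forces the $\di\bar q$-component of $\di\psi=-\psi B$ to vanish, so $\di\psi$ is a left multiple of $\di q$, say $\di\psi=\gamma\,\di q$; equivalently $B=-\nu\gamma\,\di q$. Feeding this back into $\di\chi=\varphi\,\di q-\chi B=(\varphi+\kappa\gamma)\,\di q$ and invoking the same type constraint for $\chi$ shows $\varphi+\kappa\gamma$ is a single constant $\alpha$, whence $\psi=\gamma q+\delta$, $\chi=\alpha q+\beta$, and $\kappa=(\alpha q+\beta)(\gamma q+\delta)^{-1}$.

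I expect the main obstacle to be precisely the last rigidity step: turning the \emph{type} information from the self-dual/anti-self-dual decomposition into the statement that the coefficients $\gamma$ and $\alpha$ are genuinely \emph{constant} in $q$, which is where the remaining integrability conditions $\di A=\bar A\wedge A$ and $\di B=B\wedge B$ must be used in full, not merely their algebraic (degree-zero) consequences. A secondary technical point is that $\mbb{H}_{\mbb{C}}$ has zero divisors, so the normalizing norms and the factor $\gamma q+\delta$ are invertible only on an open dense set; the argument should be carried out there, with the M\"obius form understood on that domain and extended by analyticity, and care taken throughout with the noncommutative ordering of the biquaternionic coefficients.
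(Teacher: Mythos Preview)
Your strategy parallels the paper's (differentiate \eqref{eq:tconcond1}, read off $B:=\di\nu\,\nu^{-1}$ in terms of $\xi$, then prove linearity), but the step where you assert that the $\di\bar q$--component of $\di\psi=-\psi B$ vanishes is false for your choice $\psi=\nu^{-1}$. Carrying out the self-dual/anti-self-dual match of $A\wedge\di q=\di q\wedge B$ with $A=\tfrac12(\di q\,\xi-\bar\xi\,\di\bar q)$ gives, using \eqref{eq:qident3},
\[
B=\di\nu\,\nu^{-1}=\tfrac12\bigl(\di\bar q\,\bar\xi+3\,\xi\,\di q\bigr),
\]
which has a nonzero $\di\bar q$ part. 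Consequently $\nu^{-1}$ is \emph{not} affine in $q$, and your factorization $\kappa=\chi\psi^{-1}$ with $\psi=\nu^{-1}$ cannot yield linear $\chi,\psi$. (Note also that $B\mapsto\di q\wedge B$ has kernel $\lambda\,\di q$, so the wedge identity alone cannot kill the $\di\bar q$ term.)

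The paper repairs exactly this by a central rescaling: set $z^4=\nu\bar\nu$, $\chi=z\varphi$, $\psi=\nu/z$, so that $\di\kappa=\chi\,\di q\,\psi$ still holds but now
\[
\chi^{-1}\di\chi=\di q\,\xi,\qquad \di\psi\,\psi^{-1}=\xi\,\di q,
\]
with the $\di\bar q$ contributions absorbed by $\di z/z$. At this point the constancy issue you flag as the main obstacle evaporates: by Sudbery's result, any biquaternion function with $\di f=\di q\,g(q)$ is automatically right-affine $f=q\alpha+\beta$ (and dually for $g(q)\,\di q$), so $\chi^{-1}$ and $\psi^{-1}$ are linear without any further integrability argument. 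Your caveat about zero divisors is well taken and handled, as you suggest, by working on the open dense invertible locus.
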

\begin{proof}  
First, taking the exterior derivative of equation \eqref{eq:tconcond1}, we find
\begin{equation*}
\bar{\varphi}\di \varphi \wedge\di q-\di q\wedge (\di \nu)\nu^{-1}=0.
\end{equation*}
From equations  \eqref{eq:qident3} and \eqref{eq:tconcond2}, this implies 
\begin{equation*}
\di \nu \, \nu^{-1}=\frac{1}{2}(\di \bar{q} \bar{\xi}+3\xi \di q)
\end{equation*}
Now, define the functions
\begin{equation*}
z^4=\nu\bar{\nu}, \ \ \chi = z \varphi, \ \ \psi = \frac{\nu}{z} . 
\end{equation*}
where the particular branch of the complex function $(\nu\bar{\nu})^{1/4}$ will not be important for this analysis, as the result will hold locally.  With this definition equation \eqref{eq:tconcond1} becomes
\begin{equation}\label{eq:tconfeqcp}
\di \kappa= \chi \di q \psi.
\end{equation}
After a short calculation, we find
\begin{equation*}
 \chi^{-1} \di \chi =\di q \,\xi , \ \ \di \psi \, \psi^{-1}=\xi \,\di q.
\end{equation*}
Quaternion valued functions whose differential can be written $\di f(q)=\di q \,g(q)$ are necessarily linear  of the form $f(q)=q\alpha+\beta$ \cite{Sudbery}, so we conclude that the two functions are given by
\begin{equation*}
\chi^{-1}=q\tilde{\alpha}+\tilde{\beta}, \ \ \psi^{-1}=\tilde{\gamma}q+\tilde{\delta}
\end{equation*}
for $\tilde{\alpha},\tilde{\beta},\tilde{\gamma},\tilde{\delta}\in\mbb{H}_{\mbb{C}}$. This implies 
\begin{equation*}
\di \kappa = (q\tilde{\alpha}+\tilde{\beta})^{-1}\di q (\tilde{\gamma}q+\tilde{\delta})^{-1},
\end{equation*}
which is the differential of \eqref{eq:bqmob} if we identify $\tilde{\alpha}=(\alpha\gamma^{-1}\delta-\beta)^{-1}$, $\tilde{\beta}=(\alpha-\beta\delta^{-1}\gamma)^{-1}$, $\tilde{\gamma}=\gamma$, $\tilde{\delta}=\delta$.
\end{proof}

\subsection{Relation to twistor theory}\label{sec:rtwist}

We will now consider extending the map  defined in theorem \ref{tcon} to the domain $q\in \mathbb{H}_{\mbb{C}}$.  This space can be identified with the complexified Minkowski space $\mbb{C}^4$ considered in twistor theory, and many of the constructions used in twistor theory have analogues in this formulation. 
A relevant description of twistor theory is given in the books of Ward and Wells \cite{ward1991twistor} and Dunajski \cite{dunajski2010solitons}. Define the correspondence space $\mathcal{F}=\mbb{H}_\mbb{C}\otimes\mbb{I}_u$, the twistor space $\mathcal{P}\mathcal{T}=\mbb{H}\otimes\mbb{I}_u$ and the double fibration 
\begin{equation}\label{eq:doublefib}
\mbb{H}_\mbb{C}\xleftarrow{r} \mathcal{F}\xrightarrow{a}\mathcal{P}\mathcal{T}.
\end{equation}
Here the map $r$ is the natural projection map, and $a$ maps onto the $\alpha$-planes of $\mathcal{F}$.  The coordinates used here are chosen such that the Euclidean slice is identified with the real surface $\Im(q)=0$, and are related to the usual four complex dimensional coordinate system $(z,\tilde{z},w,\tilde{w})$ by 
\begin{equation*}
z=\frac{q_0+\ci q_3}{\sqrt{2}}, \ \tilde{z}=\frac{q_0-\ci q_3}{\sqrt{2}}, \ w = \frac{q_1+\ci q_2}{\sqrt{2}}, \ \tilde{w}=\frac{-q_1+\ci q_2}{\sqrt{2}}
\end{equation*} 
In this coordinate system, the metric is 
\begin{equation*}
\di s^2 = \di q \odot \di \bar{q},
\end{equation*}
where $\odot$ denotes the symmetric product.  
An $\alpha$-plane is defined as a set of points $q\in\mbb{H}_\mbb{C}$ satisfying
\begin{equation*}
\pi_\eta^+ (q-c) = 0 
\end{equation*}
for some fixed $c\in\mbb{H}_{\mbb{C}}$.  This is the quaternion equivalent of the twistor definition of an $\alpha$-plane $x^{AA'}\pi_{A'}=\omega^A$, which shows how $\pi_\eta^+$ is related to the coordinate $\pi_{A'}$ defined in twistor theory.  Two points $q_1$ and $q_2$ lie on the same $\alpha$-plane if they satisfy the relation
\begin{equation}\label{eq:pointequiv}
q_2 =q_1+ \pi_\eta^- \delta,
\end{equation}
which is analogous to the translation $x^{AA'}\rightarrow x^{AA'}+\kappa^A\pi^{A'}$ from twistor theory.

Consider a map $\hat{\tau}:\mathcal{F}\rightarrow\mathcal{F}$ that descends to the space $\mbb{H}_\mbb{C}$ under the projection $r$. We will also want this map to descend to the twistor space, in the sense that the diagram 
\begin{equation} \label{eq:comm2}
\begin{tikzcd}
\mathcal{F} \arrow[r, "\hat{\tau}"] \arrow[d, "a"']
& \mathcal{F} \arrow[d, "a"] \\
\mathcal{P}\mathcal{T}\arrow[r, "\tau"]
& \mathcal{P}\mathcal{T}
\end{tikzcd}
\end{equation}
commutes, yielding a twistor conformal function $\tau$. On the correspondence space, the functions $\kappa(q),\varphi(q)$ will be the same as in theorem \ref{tcon}, but are extended to the domain of $\mathcal{F}$. As before, define the function $m_\eta(q)=\mathcal{M}_{\bar{\varphi}}(\eta)$. This function descends to the twistor space only if $d m_\eta (\pi_\eta^-\delta)=0$, where $\di m_\eta(\pi_\eta^-\delta)$  is a contraction of a one form with a vector under the identification of $T\mbb{H}_{\mbb{C}}$  with $\mbb{H}_{\mbb{C}}$. This condition can be rewritten
\begin{equation}\label{eq:tlineeq1}
\pi_\eta^+[\eta,\bar{\varphi}\di \varphi(\pi_\eta^-\delta)]=0 \ \ \forall\delta\in\mbb{H}_{\mbb{C}}.
\end{equation}
 Similarly, $\kappa(q)$ descends to the twistor space only if 
\begin{equation}\label{eq:tlineeq2}
\pi_{m_\eta}^+\di\kappa(\pi_\eta^-\delta)=0 \ \ \forall\delta\in\mbb{H}_{\mbb{C}}.
\end{equation}

These conditions fit in naturally with the conditions of holomorphicity derived in the previous section.  To see this, first define
\begin{mydef}
A one-form $\omega\in \Gamma(\mbb{H}_{\mbb{C}}\otimes \Lambda^1)$ is holomorphic if it satisfies the condition $\omega(\ci\delta)=\ci\omega(\delta)$ for all $\delta\in\mbb{H}$. A function $f\in\Gamma(\mbb{H}_\mbb{C})$ defined on $\mathcal{F}$ is said to be holomorphic if its differential is holomorphic.
\end{mydef}
We then prove 
\begin{mylemma}\label{hollemma}
For any holomorphic one-form  $\omega$, the condition
\begin{equation*} 
\pi_\eta^+\omega(\pi_\eta^-\delta)=0 \  \forall\delta\in\mbb{H}_{\mbb{C}}
\end{equation*}
is equivalent to
 \begin{equation*}
 \mathcal{I}_\eta \pi_\eta^+ \omega=0.
 \end{equation*}
Furthermore,  only holomorphic forms can satisfy the first condition for all $\eta$.
\end{mylemma}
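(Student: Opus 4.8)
The plan is to translate both conditions into statements about how $\omega$ interacts with the eigenspace decomposition of left multiplication by $\eta$, where the two conditions become manifestly identical. Writing $L_\eta$ for left multiplication by $\eta$ on $\mbb{H}_{\mbb{C}}$, the relation $\eta^2=-1$ together with the centrality of $\ci$ shows that $\tfrac{1}{2\ci}\pi_\eta^{\pm}$ are the spectral projections of $L_\eta$ onto its $\pm\ci$ eigenspaces, which I denote $V_\eta^{\pm}$. In particular $\pi_\eta^-\delta$ ranges over all of $V_\eta^-$ as $\delta$ ranges over $\mbb{H}_{\mbb{C}}$, and $\pi_\eta^+ w=0$ exactly when $w\in V_\eta^-$. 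Hence, with no use of holomorphicity, the first condition $\pi_\eta^+\omega(\pi_\eta^-\delta)=0$ for all $\delta$ is equivalent to the purely geometric statement $\omega(V_\eta^-)\subseteq V_\eta^-$.

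Next I would unwind the second condition. From the definitions $\mathcal{I}_\eta=\eta-I^r_\eta$ and $(I^r_\eta\omega)(\delta)=\omega(\eta\delta)$ (precomposition with the domain complex structure $L_\eta$, which is the content of $I^r_\eta(\di q)=\eta\,\di q$ and of the holomorphicity rule \eqref{eq:fholdef}), one computes for $\psi=\pi_\eta^+\omega$ that $(\mathcal{I}_\eta\psi)(\delta)=\eta\,\psi(\delta)-\psi(\eta\delta)$. Since $\psi(\delta)=\pi_\eta^+\omega(\delta)\in V_\eta^+$, the first term is $\ci\,\psi(\delta)$, and because $\ci$ is central I can pull $\pi_\eta^+$ out to the left, giving $(\mathcal{I}_\eta\pi_\eta^+\omega)(\delta)=\pi_\eta^+\big[\ci\,\omega(\delta)-\omega(\eta\delta)\big]$. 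Decomposing $\delta=\delta^++\delta^-$ along $V_\eta^\pm$ and invoking holomorphicity $\omega(\ci\,\cdot)=\ci\,\omega(\cdot)$ collapses the bracket to $2\ci\,\omega(\delta^-)$, so the second condition reduces to $\pi_\eta^+\omega(\delta^-)=0$ for all $\delta^-\in V_\eta^-$, i.e. again to $\omega(V_\eta^-)\subseteq V_\eta^-$. This matches the reformulation of the first condition and establishes the equivalence; note that holomorphicity is used in both directions, precisely at this collapse.

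For the final sentence I would drop the holomorphicity hypothesis and show it is forced. Applying the first condition to both $\eta$ and $-\eta$ (both lie in $\mbb{I}_u$, and $V_{-\eta}^-=V_\eta^+$) shows that $\omega$ preserves both $V_\eta^+$ and $V_\eta^-$ for every $\eta$. Splitting $\omega=\omega_h+\omega_a$ into its $\ci$-linear and $\ci$-antilinear parts, I would write $\omega_a(v)=T(v^{*})$ for a $\ci$-linear map $T$, where $*$ denotes conjugation with respect to $\ci$; since $(V_\eta^{\pm})^{*}=V_\eta^{\mp}$, the invariance of $V_\eta^{\pm}$ under $\omega_a$ translates into $T$ mapping $V_\eta^{+}$ into $V_\eta^{-}$ and $V_\eta^{-}$ into $V_\eta^{+}$, i.e. $T$ anticommutes with every $L_\eta$. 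Using $L_{\bk}=L_{\bi}L_{\bj}$ then forces $T$ to both commute and anticommute with $L_{\bk}$, whence $T=0$ and $\omega=\omega_h$ is holomorphic.

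The main obstacle is the bookkeeping in the second step: keeping straight which multiplications act on the domain (inside $\omega$, through $I^r_\eta$ as precomposition) versus on the target value (the left factor $\pi_\eta^+$ and the $\eta$ in $\mathcal{I}_\eta$), and confirming that $I^r_\eta$ really is precomposition with $L_\eta$ so that the stated formula for $(\mathcal{I}_\eta\psi)(\delta)$ holds. Once the eigenspace reformulation of the first step is in place, both the equivalence and the final clause reduce to this bookkeeping together with the elementary fact that left multiplications by $\bi,\bj,\bk$ admit no nonzero common anticommutant, so no genuine computation remains.
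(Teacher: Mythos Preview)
Your proof is correct. The equivalence argument is essentially the paper's own computation recast in eigenspace language: the paper directly writes $\pi_\eta^+\omega(\pi_\eta^-\delta)=\ci\,\pi_\eta^+\omega(\delta)-\pi_\eta^+\omega(\eta\delta)=\mathcal{I}_\eta\pi_\eta^+\omega(\delta)$, using $\omega(\ci\delta)=\ci\omega(\delta)$ and $\ci\pi_\eta^+=\eta\pi_\eta^+$, which is exactly your ``collapse'' after decomposing $\delta=\delta^++\delta^-$. So there is no substantive difference in the first part.

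For the final clause your route genuinely diverges from the paper's. The paper splits $\omega=\omega^++\omega^-$, observes that the anti-holomorphic piece must then satisfy $(I_\eta^r+\eta)\pi_\eta^+\omega^-=0$ for all $\eta$, and invokes the representation-theoretic decomposition (there is no invariant subspace under the Lie group generated by $\mathcal{I}_\eta$ meeting this for all $\eta$) to conclude $\omega^-=0$. You instead apply the first condition to $\pm\eta$ to get $\omega(V_\eta^\pm)\subseteq V_\eta^\pm$ for every $\eta$, convert the anti-linear part into a $\ci$-linear $T$ anticommuting with every $L_\eta$, and kill $T$ via the elementary contradiction from $L_{\bk}=L_{\bi}L_{\bj}$. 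Your argument is more self-contained and avoids appealing to the $\tfrac{\mathbf{1}}{\mathbf{2}}\otimes\tfrac{\mathbf{1}}{\mathbf{2}}$ machinery used elsewhere in the paper; the paper's version, on the other hand, keeps the proof in the same representation-theoretic idiom as the surrounding analysis of $\mathcal{J}_\eta$ and $\mathcal{I}_\eta$.
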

\begin{proof}
First assume that $\omega$ is holomorphic and satisfies $\mathcal{I}_\eta \pi^+_\eta \omega(\delta) = 0$. By linearity, we have $\pi^+_\eta \omega(\pi^-_\eta \delta) = \ci \pi_\eta^+ \omega(\delta)-\pi^+_\eta \omega(\eta\delta)$. By definition $\omega(\eta\delta) = I_\eta^r \omega(\delta)$, and also $\ci \pi_\eta^+ = \eta \pi^+_\eta$. Therefore, $\pi_\eta^+ \omega(\pi^-_\eta \delta) = \mathcal{I}_\eta \pi^+_\eta \omega(\delta) = 0$. Now assume that $\pi^+_\eta \omega(\pi^-_\eta \delta) = 0$, the same calculation in reverse shows $\mathcal{I}_\eta \pi^+_\eta \omega(\delta) = 0 \ \forall \delta \in \mbb{H}_{\mbb{C}}$, so $\mathcal{I}_\eta \pi^+_\eta \omega= 0$.

Now consider a general one-form $\omega$, and decompose it into holomorphic and anti-holomorphic components $\omega = \omega^+ + \omega^-$. If $\omega$ satisfies the first condition for all $\eta$, then so do $\omega^+$ and $\omega^-$, because they are in different eigenspaces which commute with the applied operators. The same calculation as above implies $(I_\eta^r + \eta)\pi^+_\eta \omega^-= 0$, which must be invariant under the Lie group generated by $\mathcal{I}_\eta$.  However, there is no subgroup satisfying this condition for all $\eta$, which implies that $\omega$ must be holomorphic.
\end{proof}

Using this lemma, we conclude that holomorphicity on the correspondence space descends to holomorphicity on the twistor space, and the converse.

 For the discussion that follows, it will also be useful to consider the right action of quaternion multiplication, which  leads to the consideration of ``ambitwistor'' space.  Ambitwistor space was introduced to extend the ideas of twistor theory to study general solutions of the Yang-Mills field equations \cite{WITTEN1978394}\cite{ISENBERG1978462}.  The analysis that follows will yield some interesting parallels to this theory.  Here we will introduce some of the details of ambitwistor space in this context.
 
 To begin, we first define a $\beta$-plane similarly to an $\alpha$-plane, by a set of points $q$ satisfying 
 \begin{equation*}
 (q-c) \pi_\eta^+= 0, 
\end{equation*}
for some $c\in\mbb{H}_{\mbb{C}}$, $\eta\in\mbb{I}_u$.  Ambitwistor space refers to the set of null lines in four complex dimensions, denoted $P\mbb{A}$, which can be thought of as the intersection of an $\alpha$-plane and a $\beta$-plane, and are invariant under the translation $q \rightarrow q+\pi_{\eta_1}^-\delta\pi_{\eta_2}^-$.  Defining an $\alpha$-plane by a pair $(c_1,\eta_1)$, and a $\beta$-plane by a pair $(c_2,\eta_2)$, these planes intersect if and only if the condition  
 \begin{equation*}
 \pi_{\eta_1}^+(c_1-c_2)\pi_{\eta_2}^+= 0 
\end{equation*}
holds.  This condition is analogous to the usual condition $Z\cdot W=0$ for an $\alpha$-plane Z and $\beta$-plane W regarded as points in $\text{CP}^3$, with the dot product taken in four dimensions.  Motivated by this condition, we define
\begin{mydef}
Given $\eta_1,\eta_2\in\mbb{I}_u$, and a point $q_0\in \mbb{H}_\mbb{C}$, define a null hyperplane as the set of points satisfying the condition
\begin{equation*}
\pi_{\eta_1}^+(q-q_0)\pi_{\eta_2}^+=0.
\end{equation*}
This defines a three complex dimensional space, which can also be thought of as the linear union of an $\alpha$-plane and a $\beta$-plane that intersect. 
\end{mydef}

Using this definition, one can parameterize the space of null lines through a construction similar to the Klein correspondence, which is reviewed in appendix \ref{sec:klein}. In this case, fix a null hyperplane $H$, which we take to intersect the origin.  Take a null line $L\in P\mbb{A}$ which is the intersection of an $\alpha$-plane $A$ and $\beta$-plane $B$.   For general $L$, both $A$ and $B$ will intersect $H$ on null lines $L_A,L_B\subset H$.  These null lines will then intersect in a point, which is the point of intersection between $L$ and $H$.  Explicitly, if $H$ is defined by  $(\eta_1,\eta_2)$, and $L$ is defined by $(\mu_1,\mu_2)$ and intersects a point $q$, then define $\delta=(\eta_1-\mu_1)^{-1}q(\eta_2-\mu_2)^{-1}$.  Using the property $\pi_\eta^+ \ci = \pi_\eta^+ \eta$, one finds that the point $q-\pi_{\mu_1}^-\delta \pi_{\mu_2}^-$ lies on the hyperplane $H$, and therefore is the intersection point between $H$ and $L$.  If $\eta_1=\mu_1$ or $\eta_2=\mu_2$, however, this construction is not well defined.  This therefore defines a coordinate patch on the space $P\mbb{A}$, and choosing a different hyperplane defines a different coordinate patch, which is a double cover of $P\mbb{A}$.

\section{Holomorphic maps of ambitwistor spaces}

\subsection{Frustrated conformal transformations}

In AHS proposition 3.1 \cite{10.2307/79638}, a general result was derived concerning the existence of solutions to Dirac type equations, which was used to establish a correspondence between self-dual Yang-Mills fields on a self-dual manifold and holomorphic bundles on a complex manifold, extending the ideas of Penrose and Ward \cite{Atiyah1977}.  The operators in this theorem were defined using a symbol operator $\sigma$, which projects onto a subspace of the image of a connection $\nabla$.  In that context, the symbol was defined on Clifford algebra valued differential forms, and used to define the Dirac and twistor operators.

In the present context, we will define a symbol operator $\bar{\sigma}$ as projecting onto the complement of the subspace $\lambda \di q$, where $\lambda$ is a complex scalar.   To motivate the use of this symbol operator, rewrite equation \eqref{eq:tconfeqcp} in the form
\begin{equation}\label{eq:snvk}
\bar{\sigma}\nabla f = 0,
\end{equation}
where $f=\chi^{-1} \kappa \psi^{-1}$ and  $\nabla$ is defined by $\nabla f=\di f+  \di q \xi f + f \xi \di q$.   To generalize this equation, we consider a connection $\nabla$ taking values in the Lie algebra of the conformal orthogonal group, acting as $\nabla f =\di f+ A f + f B$.  To characterize solutions of the equation $\bar{\sigma}\nabla f =0$, we follow AHS proposition 3.1, and define $S_1$ to be the kernel of $\bar{\sigma}$, which are elements of the form $\lambda\di q$.  
The bundle $S_2\subset \mbb{H}_\mbb{C}\otimes \Lambda^2$ denotes the image of $S_1\otimes \Lambda^1$ under exterior multiplication.  There it was shown that this equation is integrable if and only if $D_1 \Gamma(S_1)\in \Gamma(S_2)$ and $\Omega\Gamma(E)\in\Gamma(S_2)$, where $D_1$ is the extension of $\nabla$ to one forms, $\Omega$ is the curvature of $\nabla$, and $E$ is the fibre bundle.  We first characterize $S_2$ by
\begin{mylemma}
The bundle $S_2$ is given by
\begin{equation*}
S_2=\{\omega\in\Gamma(\mbb{H}_\mbb{C}\otimes \Lambda^2):\omega=\frac{1}{2}(\bar{\gamma}\di \bar{q}\wedge\di q - \di q \wedge\di \bar{q}\bar{\gamma}),\gamma\in \Gamma(\mbb{H}_\mbb{C})\},
\end{equation*}
\end{mylemma}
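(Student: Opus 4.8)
The plan is to convert the abstract description of $S_2$ into a concrete one and then match it against the claimed family by a single pointwise identity. Since $S_1=\ker\bar\sigma$ consists of the forms $\lambda\,\di q$ with $\lambda$ a complex scalar, and $S_2$ is the image of $S_1\otimes\Lambda^1$ under the wedge product, the scalar $\lambda$ may be absorbed into the one-form it multiplies; thus I would first record that
\begin{equation*}
S_2=\{\di q\wedge\omega:\omega\in\Gamma(\Lambda^1\otimes\mbb{C})\}=\mathrm{span}_{\mbb{C}}\{\di q\wedge\di x_0,\,\di q\wedge\di x_1,\,\di q\wedge\di x_2,\,\di q\wedge\di x_3\},
\end{equation*}
a fibrewise subspace of complex dimension (at most) four. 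On the other side, the claimed set is the image of the $\mbb{C}$-linear map $\gamma\mapsto\tfrac12(\bar\gamma\,\di\bar q\wedge\di q-\di q\wedge\di\bar q\,\bar\gamma)$ on $\mbb{H}_{\mbb{C}}$, so it is likewise an at-most-four-dimensional fibrewise subspace. The theorem then amounts to showing these two four-parameter families agree.

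The core step is the pointwise identity
\begin{equation*}
\tfrac12\big(\bar\gamma\,\di\bar q\wedge\di q-\di q\wedge\di\bar q\,\bar\gamma\big)=\di q\wedge\omega(\gamma),
\end{equation*}
with $\omega(\gamma)$ a complex scalar one-form depending $\mbb{C}$-linearly on $\gamma$. I would establish it by expanding both sides in the self-dual/anti-self-dual basis recorded in Section~\ref{sec:qanalysis}. Writing $S_i^{+}$ and $S_i^{-}$ for the self-dual and anti-self-dual two-forms built from $\di x_\mu$, those expansions give $\di q\wedge\di\bar q\propto\sum_i\e_i S_i^{+}$ and $\di\bar q\wedge\di q\propto\sum_i\e_i S_i^{-}$, so the two terms of the left-hand side are exactly its anti-self-dual part $\bar\gamma\sum_i\e_i S_i^{-}$ and its self-dual part $\sum_i\e_i\bar\gamma\, S_i^{+}$; equivalently, one may run the same reduction using \eqref{eq:qident3} and its variants for the reversed orderings of $\di q$, $\di\bar q$ and $\bar\gamma$. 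Collecting the coefficients of $\di x_0\wedge\di x_i$ and $\di x_j\wedge\di x_k$ and comparing with the expansion of $\di q\wedge\di x_\mu$ then pins down $\omega(\gamma)$; the anticommutator $\{\e_i,\bar\gamma\}$ feeds the $\di x_0\wedge\di x_i$ terms while the commutator $[\e_i,\bar\gamma]$ feeds the spatial terms, and the upshot is the explicit formula $\omega(\gamma)\propto-\gamma_0\,\di x_0+\sum_i\gamma_i\,\di x_i$ for $\gamma=\gamma_0+\sum_i\gamma_i\e_i$.

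Finally, since $\gamma\mapsto\omega(\gamma)$ is visibly an invertible linear map $\mbb{H}_{\mbb{C}}\to\Gamma(\Lambda^1\otimes\mbb{C})$, as $\gamma$ ranges over $\mbb{H}_{\mbb{C}}$ the one-form $\omega(\gamma)$ ranges over all complex scalar one-forms, so the claimed family is precisely $\{\di q\wedge\omega\}=S_2$; the bijectivity upgrades the containment to an equality with no separate dimension count. I expect the only real obstacle to be computational bookkeeping: keeping the left/right orderings of $\bar\gamma$ relative to $\di q$ and $\di\bar q$ straight so that the commutator/anticommutator split and all signs come out correctly, and checking that the resulting map $\gamma\mapsto\omega(\gamma)$ is genuinely nonsingular (so that one truly gets equality, not merely one inclusion). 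A minor additional point is to confirm the identification is chart-independent, so that the statement holds as an equality of bundles rather than only fibrewise.
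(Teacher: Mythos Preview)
Your proposal is correct and rests on the same identity as the paper's proof, but you run it in the opposite direction and with more machinery than necessary. The paper simply writes an arbitrary complex scalar one-form as $r=\di q\,\gamma+\bar\gamma\,\di\bar q$, wedges $\lambda\,\di q$ against it, and applies identity \eqref{eq:qident3} once to obtain
\[
\lambda(\di q\,\gamma+\bar\gamma\,\di\bar q)\wedge\di q=\tfrac{\lambda}{2}\big(\bar\gamma\,\di\bar q\wedge\di q-\di q\wedge\di\bar q\,\bar\gamma\big),
\]
after which absorbing $\lambda$ into $\gamma$ finishes. This forward computation parametrizes the image directly, so no separate bijectivity or dimension check is needed; your reverse direction (show each claimed element equals $\di q\wedge\omega(\gamma)$, then verify $\gamma\mapsto\omega(\gamma)$ is onto) arrives at the same formula $\omega(\gamma)\propto-\gamma_0\,\di x_0+\sum_i\gamma_i\,\di x_i$ but via an extra step. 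The self-dual/anti-self-dual expansion you outline is a valid alternative to invoking \eqref{eq:qident3}, though it is exactly the computation that \eqref{eq:qident3} already packages.
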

\begin{proof}
An arbitrary real one form $r\in\Gamma(\Lambda^1)$ can be written 
$r=\di q \gamma+\bar{\gamma}\di \bar{q}$ for $\gamma\in\mbb{H}_\mbb{C}$, and an arbitrary element of $S_1$ can be written $\lambda\di q$. We have
\begin{equation*}
\lambda(\di q \gamma+\bar{\gamma}\di \bar{q})\wedge \di q=\frac{1}{2}\lambda(\bar{\gamma}\di \bar{q}\wedge\di q - \di q \wedge\di \bar{q}\bar{\gamma}).
\end{equation*} 
 Absorbing $\lambda$ into $\gamma$ yields the form stated.
 \end{proof}

 The theorem we wish to prove is
\begin{mythm}\label{frustconf}
The equation $\bar{\sigma}\nabla f=0$ is integrable if and only if there is a gauge where $\nabla$ takes the form
\begin{equation}\label{eq:nablaeqn}
\nabla f=\di f+\di q \xi f + f \xi \di q,
\end{equation}
and $\xi$ satisfies the equations
\begin{equation}\label{eq:xicond}
(\partial_{\bar{q}} -\bar{\xi})\di\bar{q}\wedge\di q\xi
=0, \  \text{Im}(\partial_{\bar{q}}\xi)=0.
\end{equation}
This makes the left connection $\nabla_\ell=\di +\di q \xi$ self-dual and the right connection $\nabla_r=\di + \xi \di q$ anti-self-dual.  A solution to these equations will be called a frustrated conformal transformation.
\end{mythm}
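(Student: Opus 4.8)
The plan is to invoke the AHS criterion quoted above: the equation $\bar\sigma\nabla f=0$ is integrable if and only if (I) $D_1\Gamma(S_1)\subset\Gamma(S_2)$ and (II) $\Omega\Gamma(E)\subset\Gamma(S_2)$, with $S_1=\{\lambda\di q\}$ and $S_2$ as characterized in the preceding lemma. Both directions of the biconditional are handled at once by translating these two conditions into statements about a general conformal-orthogonal connection $\nabla f=\di f+Af+fB$, where $A,B$ are $\mbb{H}_\mbb{C}$-valued one-forms acting on the left and right. The forward direction reads off the normal form and the equations \eqref{eq:xicond} from (I) and (II); the converse checks that the normal form together with \eqref{eq:xicond} returns (I) and (II).

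First I would use condition (I) to pin down the gauge. Writing the extension to one-forms as $D_1\omega=\di\omega+A\wedge\omega-\omega\wedge B$ and evaluating on $\lambda\di q\in\Gamma(S_1)$ gives $D_1(\lambda\di q)=\di\lambda\wedge\di q+\lambda(A\wedge\di q-\di q\wedge B)$. Since $\di\lambda$ is a scalar one-form, the first term already lies in $S_2$ by the lemma, so (I) reduces to the purely algebraic requirement $A\wedge\di q-\di q\wedge B\in\Gamma(S_2)$. I would decompose $A$ and $B$ into $\di q$- and $\di\bar{q}$-type components, use identity \eqref{eq:qident3} and the self-dual/anti-self-dual classification of two-forms from Section \ref{sec:qanalysis} to match against the explicit description of $S_2$, and then show that the residual gauge freedom preserving $S_1$ (the conformal-orthogonal isotropy acting by $f\mapsto g_L f g_R$) removes all connection components except a single $\xi$, bringing $\nabla$ to the form \eqref{eq:nablaeqn} with $A=\di q\,\xi$ and $B=\xi\,\di q$. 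As a consistency check, in this gauge $D_1(\di q)=\di q\,\xi\wedge\di q-\di q\wedge\xi\,\di q=0$, because $\di q\,\xi\wedge\di q=\di q\wedge\xi\,\di q$, so (I) indeed holds automatically.

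Next I would impose condition (II). In the gauge \eqref{eq:nablaeqn} the curvature factorizes as $\Omega f=F_\ell f+fF_r$, with $F_\ell=\di(\di q\,\xi)+\di q\,\xi\wedge\di q\,\xi$ and $F_r=\di(\xi\,\di q)+\xi\,\di q\wedge\xi\,\di q$, since left and right multiplication commute. Evaluating the membership $\Omega f\in\Gamma(S_2)$ on the four basis elements $f=\e_\mu$, and using that an element of $S_2$ links a self-dual piece $-\tfrac12\di q\wedge\di\bar{q}\,\bar\gamma$ to an anti-self-dual piece $\tfrac12\bar\gamma\,\di\bar{q}\wedge\di q$ through one common $\gamma$, I would show the condition forces $F_\ell$ to be self-dual and $F_r$ to be anti-self-dual. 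Computing $F_\ell$ with \eqref{eq:qident3} and extracting its anti-self-dual ($\di\bar{q}\wedge\di q$-valued) component, then setting it to zero, reproduces the two-form equation $(\partial_{\bar{q}}-\bar\xi)\di\bar{q}\wedge\di q\,\xi=0$; the mirror computation for $F_r$, anti-self-dual by the same argument with left and right interchanged, contributes the remaining scalar constraint $\text{Im}(\partial_{\bar{q}}\xi)=0$. Running the equivalences in reverse establishes the converse.

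I expect the main obstacle to be the bookkeeping in condition (II): because $S_2$ entangles the self-dual and anti-self-dual sectors through a single quaternion $\gamma$ while $A$ and $B$ are built from one and the same $\xi$, one must check carefully that $\Omega\Gamma(E)\subset\Gamma(S_2)$ genuinely decouples into the two independent statements (self-duality of $\nabla_\ell$, anti-self-duality of $\nabla_r$) and yields exactly the two-form equation together with the scalar reality condition in \eqref{eq:xicond}, with no spurious extra constraints. A secondary difficulty is the gauge-reduction step in (I): verifying that the isotropy group of $S_1$ is large enough to eliminate every connection component except $\xi$, so that \eqref{eq:nablaeqn} is actually attainable rather than merely one convenient representative.
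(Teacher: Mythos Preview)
Your overall strategy matches the paper's: invoke the AHS criterion and translate (I) and (II) into constraints on a general $\nabla f=\di f+Af+fB$. The main gap is in where the gauge reduction actually happens. Condition (I) alone does \emph{not} cut the connection down to a single $\xi$. In the paper, the algebraic analysis of $D_1(\lambda\,\di q)\in\Gamma(S_2)$ only forces
\[
A=\tfrac12(\di q\,\xi_1-\bar\xi_2\,\di\bar q),\qquad B=\tfrac12(\xi_1\,\di q-\di\bar q\,\bar\xi_2),
\]
with two independent parameters $\xi_1,\xi_2$ (and the matching $\xi_{1,2}=\xi'_{1,2}$ between $A$ and $B$). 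No gauge freedom is invoked here. The elimination of $\xi_2$ comes only after imposing (II): the curvature condition produces $\text{Im}(\partial_{\bar q}\xi_2)=0$ and $\text{Im}(\partial_q\bar\xi_2)=0$, which are exactly the flatness conditions for the real connection $\nabla_R=\di+\tfrac12(\di q\,\xi_2+\bar\xi_2\,\di\bar q)$. Only then is there a (real) gauge transformation integrating $\xi_2$ away, after which $A=\di q\,\xi$, $B=\xi\,\di q$. So your ``secondary difficulty'' is in fact the heart of the matter: the isotropy of $S_1$ is \emph{not} large enough at step (I); part of (II) is needed to make the gauge reduction integrable. If you try to gauge away $\xi_2$ before using (II) you are solving $g^{-1}\di g=\tfrac12(\di q\,\xi_2+\bar\xi_2\,\di\bar q)$, and that PDE simply has no solution unless its curvature vanishes.

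Once you are in the gauge $A=\di q\,\xi$, $B=\xi\,\di q$, your treatment of (II) is on the right track and agrees with the paper: $\Omega f=\Omega_\ell f+f\Omega_r$, and membership in $S_2$ forces $\Omega_\ell$ self-dual, $\Omega_r$ anti-self-dual. The paper then makes explicit the point you flag as your ``main obstacle'': a priori self-duality of $\Omega_\ell$ and anti-self-duality of $\Omega_r$ look like two separate sets of equations on one $\xi$, but writing the imaginary components as $\text{Re}(\e_j(\partial_{\bar q}-\bar\xi)\e_i\,\xi)=0$ and using cyclicity of $\text{Re}$ shows they collapse to the single two-form equation in \eqref{eq:xicond}, while the remaining scalar condition is $\text{Im}(\partial_{\bar q}\xi)=0$. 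So there are no spurious constraints, but this requires the cyclic-trace argument rather than following automatically.
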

\begin{proof}
By AHS proposition 3.1, we must prove that $D_1 \Gamma(S_1)\in \Gamma(S_2)$ and $\Omega\Gamma(E)\in\Gamma(S_2)$ iff the conditions of the proposition hold. For the first condition, we write an arbitrary connection in the form $\nabla f =\di f+ A f + f B$.  Note that the operator $D_1$ acts as $D_1\omega = \di \omega + A\wedge \omega - \omega \wedge B$ on a section $\omega\in\Gamma(E\otimes T^*M)$.  Write $A=\sum_\mu \e_\mu \di q A_\mu$ and $B=\sum_\mu B_\mu \di q \e_\mu$, take a form $\lambda \di q \in \Gamma(S_1)$ and apply the operator $D_1$, to find 
\begin{align*}
D_1 (\lambda \di q) = \di \lambda  \wedge \di q +\lambda \sum_\mu (\e_\mu \di q \wedge A_\mu \di q-&\di q \wedge B_\mu \di q \e_\mu)= \nonumber\\
 -\frac{\lambda }{2}\sum_\mu  \Big(\e_\mu \bar{A}_\mu\di \bar{q}\wedge\di q  - \bar{B}_\mu \di \bar{q}\wedge\di q  \e_\mu + &\e_\mu \di q \wedge \di \bar{q} \bar{A}_\mu  - \di q \wedge \di \bar{q} \bar{B}_\mu \e_\mu\Big)\nonumber\\
&+\frac{1}{2}(\partial_{\bar{q}}\lambda\,\di \bar{q}\wedge \di q- \di q \wedge \di \bar{q}\, \partial_{\bar{q}}\lambda).
\end{align*}
 The last expression is already of the desired form.  The term $\e_\mu \di q \wedge \di \bar{q} \bar{A}_\mu$ is of the desired form only if $A_\mu=\frac{1}{4}(2\xi_1\delta_{\mu 0}+\xi_2\e_\mu) $ for some $\xi_1,\xi_2\in\mbb{H}_\mbb{C}$, which implies that $A=\frac{1}{2}(\di q \xi_1-\bar{\xi}_2\di \bar{q})$, and similarly $B=\frac{1}{2}(\xi_1'\di q - \di \bar{q}\bar{\xi}_2')$.  For the other two terms, one must demand that $\xi_{1,2}=\xi_{1,2}'$, then this condition is satisfied.

To prove the second part, we calculate the curvature of the connection $\nabla$.  Acting on a function $f\in\Gamma(E)$, we find 
\begin{equation*}
\Omega \, f= \Omega_\ell  f + f \Omega_r, 
\end{equation*}
where $\Omega_\ell=\di A + A\wedge A$ and $\Omega_r = \di B - B\wedge B$.  Calculating $\Omega_\ell$ yields
\begin{align*}
\Omega_\ell=\frac{1}{4}\Big(-\bar{\xi}_2\overleftarrow{\partial_{q}}\di q & \wedge\di \bar{q}-\di q \wedge \di \bar{q}\big(\partial_{\bar{q}}\xi_1+\frac{1}{2}\lvert \xi_1+\xi_2\rvert^2\big) \nonumber\\
&+\bar{\xi}_2\di\bar{q}\wedge\di q\overleftarrow{\partial_q}+\partial_{\bar{q}}\di \bar{q}\wedge\di q\xi_1
-\frac{1}{2}(\bar{\xi}_1+\bar{\xi}_2)\di \bar{q}\wedge\di q (\xi_1+\xi_2)\Big).
\end{align*}
Here we have used the identity $\di q\wedge \di f=\frac{1}{2}\Big(\di q \wedge \di \bar{q}\,\partial_{\bar{q}}-\partial_{\bar{q}}\,\di \bar{q}\wedge \di q\Big)f$,
which can be derived using the techniques of section \ref{sec:qanalysis}. From the first term, it is clear that $\Omega\Gamma(E)\in\Gamma(S_2)$ only if $\text{Im}(\partial_{\bar{q}}\xi_2)=0$.   A similar calculation for $\Omega_r$ also requires that $\text{Im}(\partial_{q}\bar{\xi}_2)=0$. This is the integrability condition for a real connection $\nabla_R = \di + \frac{1}{2}(\di q \xi_2 +\bar{\xi}_2\di \bar{q})$, with curvature $\Omega_R = -\frac{1}{2}\text{Re}(\di \bar{q}\wedge \di q \partial_q \bar{\xi}_2+\di q \wedge \di \bar{q} \partial_{\bar{q}}\xi_2)$, which vanishes if $\text{Im}(\partial_{\bar{q}}\xi_2)=\text{Im}(\partial_{q}\bar{\xi}_2)=0$.   This means there is a real gauge transformation that absorbs $\xi_2$.  Performing this gauge transformation, the left gauge field $A$ takes the form $A=\di q \xi$, and the right gauge field takes the form $B=\xi \di q$. 

For the gauge transformed field $A=\di q \xi$, the curvature then becomes
\begin{align}\label{eq:nabcurv}
\Omega_\ell=\frac{1}{2}\Big(-\di q \wedge \di \bar{q}(\partial_{\bar{q}}+\bar{\xi})\xi+(\partial_{\bar{q}}-\bar{\xi})\di \bar{q}\wedge\di q\xi\Big).
\end{align}
If $f$ has an imaginary part, then in order for $\Omega\Gamma(E)\in\Gamma(S_2)$, we must demand that $\Omega_\ell$ is self-dual, because $\Omega_\ell$ is multiplied on the left\footnote{Here we assume that points where $f$ is real are isolated and therefore excluded as pathological cases, although this could merit further investigation.}. By a similar calculation, the right curvature is given by 
\begin{equation*}
\Omega_r = \frac{1}{2}\Big(\xi\left(\overleftarrow{\partial}_{\bar{q}}+\bar{\xi}\right)\di \bar{q}\wedge \di q - \xi \di q \wedge \di \bar{q} \left(\overleftarrow{\partial}_{\bar{q}}-\bar{\xi}\right)\Big),
\end{equation*}  
and this must be anti-self-dual.  Note that equations \eqref{eq:xicond} are sufficient to satisfy both of these conditions.  To see this, write a particular imaginary component of the first equation in the form
\begin{equation*}
\text{Re}(\e_j(\partial_{\bar{q}}-\bar{\xi})\e_i \xi)=0.
\end{equation*}  
Using the cyclic property, one can rearrange this to satisfy both imaginary parts.  Taking the real part of the first equation implies $\text{Im}(\partial_q \bar{\xi})=0$.   The second equation in \eqref{eq:xicond} then makes $\Omega_r$ anti-self-dual.  
If these conditions are satisfied, then $\Omega\Gamma(E)\in\Gamma(S_2)$ and the conditions of the theorem hold.  
\end{proof}

We use the term frustrated conformal transformation because if the connection $\nabla$ was integrable, the resulting function $\kappa=\chi f \psi$ would be conformal.  We note that the resulting equations have some similarity to the equations introduced in the study of classical Yang-Mills solutions with ambitwistor methods \cite{WITTEN1978394}.  In that case, the space was enlarged to eight dimensions, where the left and right connections acted on the two copies of complexified Minkowski space.  However, on the diagonal subspace, these equations take on a similar form, with the added ``torsion condition'' \eqref{eq:nablaeqn}.  It is not clear what the exact relationship between these cases are, but ambitwistor methods are indeed useful to interpret the solutions of these equations, which is explored further in the following section.

\subsection{Maps on ambitwistors}

The previous construction has a natural interpretation in the context of ambitwistors, which appeals to a technique similar to the Penrose-Ward transform.  Here we will show that the conditions derived previously for a frustrated conformal transformation imply that the induced map on ambitwistors preserves null lines.  We then sketch an argument for how to make this construction valid globally, although more work is necessary to make this precise.

To proceed, note that the connection $\nabla$ introduced in the previous section is integrable on null lines.  This condition is trivial, because the contraction of any two-form  with a vector of the form $\pi_{\eta_1}^-\delta \pi_{\eta_2}^-$ vanishes, which can be readily checked.  Therefore, the functions $\chi$ and $\psi$ are well defined on the null line, because the curvature vanishes.  So on this line one can define a function 
\begin{equation*}
\kappa = \chi f \psi.
\end{equation*} 
Define the ambitwistor correspondence space $\mathcal{F}_A =\mbb{I}_{u,\ell}\otimes \mbb{H}_\mbb{C} \otimes \mbb{I}_{u,r}$, where $\mbb{I}_{u,\ell}$ denotes the left  complex structure parameterized by $\eta_1$, and similarly for $\mbb{I}_{u,r}$.  $P\mbb{A}$ is then obtained by identifying points satisfying $q_1 = q_2+ \pi_{\eta_1}^- \delta \pi_{\eta_2}^-$.  The function $\chi$ then induces a map $m_{\eta_1}:\mbb{I}_{u,\ell}\rightarrow \mbb{I}_{u,\ell}$ defined by $m_{\eta_1}=\chi^{-1}\eta_1 \chi$, and similarly $m_{\eta_2}=\psi \eta_2 \psi^{-1}$.  From equation \eqref{eq:tlineeq1},  the demand that $m_{\eta_1}$ is constant on $\alpha$-planes is equivalent to the condition
\begin{equation*}
\pi_\eta^+[\eta, A(\pi_\eta^-\delta)]=0, \ \ \forall \delta\in\mbb{H}.
\end{equation*}
 From lemma \ref{hollemma}, if the form $A$ is holomorphic, then this is equivalent to the condition
\begin{equation*}
\mathcal{I}_\eta\pi_\eta^+[\eta, A]=0.
\end{equation*}
Based on the arguments of section \ref{sec:cco}, this implies
\begin{equation}\label{eq:Aeqn}
A=\frac{1}{2}(\di q \xi_1-\bar{\xi}_2\di \bar{q}).
\end{equation}
A similar analysis shows $B=\frac{1}{2}( \xi_1\di q-\di \bar{q}\bar{\xi}_2)$.  Together, these conditions imply that $m_{\eta_1}$ and $m_{\eta_2}$ are constant on null lines.  

The function $\kappa$ preserves null lines only if it preserves $\alpha$ and $\beta$ planes when restricted to a null line.  The condition that it preserves $\alpha$ planes gives
\begin{equation*}
 \pi_{\eta_1}^+  \chi^{-1} \di \kappa(\pi_{\eta_1}^-\delta)=0,
\end{equation*}
where we have allowed dependence of $\kappa$ on $\eta$. 
This can be rewritten
\begin{equation}\label{eq:kappaeqthol}
 \pi_{\eta_1}^+  \nabla f(\pi_{\eta_1}^-\delta)=0, \ \ \forall \delta \in \mbb{H},
\end{equation}
In this case, the analysis leading to equation \eqref{eq:kappaeq} implies
\begin{equation*}
 \nabla f = \di q \nu,
\end{equation*}
for some $\nu\in\mbb{H}_\mbb{C}$. 
Similarly, if we demand that the map preserves $\beta$ planes, then this implies $ \nabla f = \nu \di q$.  Combining these conditions, we find 
\begin{equation*}
\bar{\sigma}\nabla f = 0,
\end{equation*}
which is the equation for a frustrated conformal transformation introduced in the previous section. We therefore find that the conditions considered previously are sufficient to imply that the induced maps preserve null lines, and locally give rise to mappings from $P\mbb{A}$ to itself.  Because these maps are holomorphic, they are preserved locally under composition.  

For the global considerations,  one can view the ambitwistor holomorphic map as a section of a fibre bundle.  This bundle is a product bundle, where the product contains the  left and right copies of $Sp(1)$, and an affine bundle for the function $\kappa$.   Now introduce a double cover of ambitwistor space as defined in section \ref{sec:rtwist}.  Introduce two null hyperplanes $S$ and $T$.  The coordinates of the two patches will be defined by the intersection of $L\in P\mbb{A}$ with $S$ and $T$ at points $q_S$ and $q_T$.  The patching relations for $\chi$ and $\psi$ are then the path ordered exponentials
\begin{equation*}
F_\chi = \tilde{\chi}^{-1} \chi= \mathcal{P} \exp\left(\int_{q_T}^{q_S} A\right) , \ \ F_\psi = \psi \tilde{\psi}^{-1}= \mathcal{P} \exp\left(\int_{q_S}^{q_T} B\right).
\end{equation*}
On $L$ one can again define $\kappa = \chi f \psi$, which satisfies $\di \kappa = \chi  \lambda\di q \psi$ for some $\lambda\in \mbb{C}$.   Propagating the equation $\bar{\sigma}\nabla f=0$ then yields a patching relation for $\kappa$ 
\begin{equation*}
\tilde{\kappa}=F_\chi \kappa F_\psi+\tilde{\chi}(\tilde{f}-f)\tilde{\psi},
\end{equation*}
which takes the form of an affine transformation.  This is why the relevant bundle is an affine bundle.  It will require more investigation, however, to further understand the details of this construction.

\subsection{An example}

In this section we will work out the details for the simplest nontrivial example of an ambitwistor holomorphic map, based on the ``basic instanton'' of Belavin et. al. \cite{BELAVIN197585} and the quaternionic formula of Atiyah \cite{atiyah1979geometry}.  In this context, the basic instanton is a self-dual connection defined by the gauge field
\begin{equation*}
A=-\frac{\di q \,\bar{q}}{1+\lvert q\rvert^2}.
\end{equation*}
Note that usually the imaginary part of this potential is taken to preserve the unitary structure, but based on the discussion above we include the real part.  This connection has the correct form to define the left connection $\nabla_{\ell}$ for a frustrated conformal transformation, with $\xi=-\bar{q}/(1+\lvert q\rvert^2)$.  Based on theorem \ref{frustconf}, the right connection will be defined by $B=\xi \di q$.  The curvature of the left and right connections are given by 
\begin{equation*}
\Omega_\ell=\frac{\di q \wedge \di \bar{q}}{(1+\lvert q\rvert^2)^2}, \ \ \ \Omega_r=-\frac{\di \bar{q} \wedge \di q}{(1+\lvert q\rvert^2)^2},
\end{equation*}
verifying that the left connection is self-dual and the right connection is anti-self-dual.

In order to solve $\bar{\sigma}\nabla f=0$, we first restrict the equation to a null line, for which we can introduce $\kappa=\chi f \psi$, where $\nabla_\ell \chi^{-1} = 0$ and $\nabla_r \psi^{-1}=0$.  $\kappa$ satisfies the equation $\di \kappa = \chi \lambda \di q \psi$ on this line, which can be integrated if the function $\lambda$ is known.   To find $\lambda$, we expand and simplify $\di^2 f=0$ to find
\begin{equation*}
\frac{1}{2}\di \lambda = -\frac{\di q \bar{f}+f\di \bar{q}}{(1+\lvert q\rvert^2)^2},
\end{equation*}
which depends on $f$.  Restricting to a null line through the origin, and choosing the initial condition $f(0)=0$, these equations take on a particularly simple form.  The connections $\nabla_{\ell,r}$ are trivial, and $\di \lambda=0$.  Therefore we find that a particular solution on null lines through the origin is $f = \kappa = q$.  In fact, it is straightforward to verify that $f=q$ is a solution for all $q$.  The equation is conformally invariant, so any conformal transformation is also a solution.  

We see that the function $f$ is still conformal, and therefore not particularly interesting at first sight.  However, the induced map on ambitwistors is non-trivial, and therefore does not reduce to a conformal transformation.  In order to find this map, one must solve for $\chi$ and $\psi$ in a certain coordinate patch.  This has been accomplished for various self-dual connections, including the BPST instanton, as in \cite{8d0ab9a72ff7460e86eec538b76ce602}.  However, there are some tricky issues concerning the gauge, and it would be desirable to find a simpler expression for these functions in terms of quaternions.  This will be pursued further in a later publication.

\section{Conclusion}

We have defined an equation for a frustrated conformal transformation, which contains a self-dual connection and anti-self-dual connection in its definition.  These functions give rise to maps on ambitwistor space, and because these maps are closed under composition, it is possible that this could provide a geometric way to generate solutions to the self-dual Yang-Mills equations.  Furthermore, the study of these functions could give rise to a new understanding of ambitwistor space and its base space.  It will be interesting to analyze the solution space of these equations, to understand in more depth the types of transformations that arise.  Furthermore, these maps could possibly lead to more natural coordinate systems for four dimensional quaternionic manifolds.  Although more work needs to be done, these initial results appear promising.

\appendix

\section{Embedding of the Riemann sphere}
\label{sec:emriemann}
Here we will show some intermediate steps leading to equation \eqref{eq:tline} for holomorphic embeddings $\mbb{I}_u\rightarrow \mbb{H}\otimes\mbb{I}_u$.
First, define a map $v(z_1,z_2)=z_1+z_2j$, noting that equation \eqref{eq:stereographic2} can be written $\conj_{v(z_1,z_2)}(\bi)$, where $\conj_\chi(\eta)=\chi^{-1}\eta\chi$.  Under this map, a M{\"o}bius transformation is given by $\chi\rightarrow \chi \alpha + \bi\chi\beta$, where $\alpha$ and $\beta$ are defined according to equation \eqref{eq:etamob}.  Extending this map to embeddings of the form \eqref{eq:cp1tocp3}, we can write
\begin{equation}\label{eq:chimap}
\chi\rightarrow(\chi\tilde{\alpha}+\bi\chi\tilde{\beta},\chi\alpha+\bi\chi\beta).
\end{equation}
In this context, the image is only defined up to simultaneous left multiplication by $u\in\mbb{C}^*$, denoted $\mbb{H}^2/\mbb{C}^*$. Define a map $\tilde{\phi}:\mbb{H}^2/\mbb{C}^*\rightarrow\mbb{H}\otimes\mbb{I}_u$ given by 
\begin{equation*}
\tilde{\phi}(q_1,q_2)=\Big(q_2^{-1}q_1,\conj_{q_2}(\bi)\Big),
\end{equation*}
which is cleary related to equation \eqref{eq:phitriv}.  
Applying $\tilde{\phi}$ to the image of equation \eqref{eq:chimap}, we obtain a map $\tau_q$ defined by
\begin{equation*}
\tau_q\big(\conj_\chi(i)\big)\rightarrow\Big((\chi\alpha+\bi\chi\beta)^{-1}(\chi\tilde{\alpha}+\bi\chi\tilde{\beta}),\conj_{\chi\alpha+\bi\chi\beta}(\bi)\Big).
\end{equation*}
Defining $\eta=\conj_\chi(i)$ and rewriting this function in terms of $\eta$, we obtain the desired result \begin{equation*}
\tau_q(\eta)=\Big(h+\mathcal{M}_{\alpha,\beta}(\eta)\rho,\mathcal{M}_{\alpha,\beta}(\eta)\Big),
\end{equation*}
where $h=\alpha^{-1}(\beta\alpha^{-1}+\alpha\beta^{-1})^{-1}(\alpha\beta^{-1}+\tilde{\beta}\tilde{\alpha}^{-1})\tilde{\alpha}$, $\rho=\beta^{-1}(\beta\alpha^{-1}+\alpha\beta^{-1})^{-1}(\tilde{\beta}\tilde{\alpha}^{-1}-\beta\alpha^{-1})\tilde{\alpha}$, and $\alpha$ and $\beta$ are defined similarly as before.

\section{Holomorphic vector bundles and Klein correspondence}\label{sec:klein}

Here we review some aspects of holomorphic vector bundles and the Klein correspondence.  More details can be found in the books by Ward \cite{ward1991twistor} and Dunajski \cite{dunajski2010solitons}. A holomorphic vector bundle is defined by a set of local trivializations $(U_\alpha,\chi_\alpha)$ such that the transition maps $F_{\alpha\beta}$ are holomorphic on intersections.  A bundle is said to be trivial if the transition maps satisfy $F_{\alpha\beta}=H_{\beta}H_{\alpha}^{-1}$ for holomorphic functions $H_\alpha,H_\beta$ defined on $U_\alpha,U_\beta$.  On the Riemann sphere, one can introduce a double cover $U=\{(\pi_0,\pi_1)\in\mbb{C}\mbb{P}_1:\pi_1\neq 0\}$ and $\tilde{U}=\{(\pi_0,\pi_1)\in\mbb{C}\mbb{P}_1:\pi_0\neq 0\}$, and the transition function of a trivial bundle then satisfies $F=\tilde{H}H^{-1}$.  A global section of a trivial bundle is given by  a function $s_\alpha(z) = H_\alpha(z) \xi$ for a constant vector $\xi$, and these are the only globally holomorphic sections.

To define the Klein correspondence, we use  quaternionic notation, which directly parallels the construction with complex coordinates.  Consider two fixed points $P,Q\in\mathcal{P}\mathcal{T}$ corresponding to $\alpha$-planes $\tilde{P},\tilde{Q}$ defined by $\pi_{\bi}^{\pm} q=0$.  Consider a third point $Z\in\mathcal{P}\mathcal{T}$ corresponding to an $\alpha$-plane $\tilde{Z}$ defined by $\pi_\eta^+ (q-\delta)=0$.  The intersection of $\tilde{Z}$ with $\tilde{P},\tilde{Q}$ occurs at the point $q_Z^{\pm}=(\eta\mp \bi)^{-1}\pi_\eta^+ \delta$, as can be readily checked.  Clearly this is not defined if $\eta=\pm \bi$, and so this defines two coordinate charts for $\mathcal{P}\mathcal{T}$.  If $\eta\neq \pm \bi$, then both intersections $q_Z^{\pm}$ exist and lie on the $\alpha$-plane $\tilde{Z}$, and therefore can be connected by a path $\Gamma_Z$ within $\tilde{Z}$, allowing for the construction of a patching function between the two coordinate patches as in the Penrose-Ward correspondence.

\bibliographystyle{unsrt}
\bibliography{References}

\begin{thebibliography}{10}

\bibitem{Fueter}
R.~Fueter.
\newblock {\"U}ber die analytische darstellung der regulären funktionen einer
  quaternionenvariablen.
\newblock {\em Commentarii Mathematici Helvetici}, 8(1):371--378, 1935.

\bibitem{Sudbery}
A.~Sudbery.
\newblock Quaternionic analysis.
\newblock {\em Math. Proc. Camb. Phil. Soc.}, 85:195--225, 1979.

\bibitem{hamilton1866elements}
W.R. Hamilton and W.E. Hamilton.
\newblock {\em Elements of Quaternions}.
\newblock Longmans, Green, \& Company, 1866.

\bibitem{alggeom}
Dominic Widdows.
\newblock {\em Quaternion Algebraic Geometry}.
\newblock PhD thesis, St. Anne's College, Oxford, June 2006.

\bibitem{Joyce98hypercomplexalgebraic}
Dominic Joyce.
\newblock Hypercomplex algebraic geometry.
\newblock In {\em Quaternionic Structures in Mathematics and Physics}, pages
  143--194. World Scientific, 1998.

\bibitem{kulkarni1978}
R.~S. Kulkarni.
\newblock On the principle of uniformization.
\newblock {\em J. Differential Geom.}, 13(1):109--138, 1978.

\bibitem{Salamon1986}
S.~M. Salamon.
\newblock Differential geometry of quaternionic manifolds.
\newblock {\em Annales scientifiques de l'École Normale Supérieure},
  19(1):31--55, 1986.

\bibitem{ellingsrud2002calabi}
G.~Ellingsrud, M.~Gross, D.~Huybrechts, L.~Olson, D.~Joyce, K.~Ranestad, and
  S.A. Stromme.
\newblock {\em Calabi-Yau Manifolds and Related Geometries: Lectures at a
  Summer School in Nordfjordeid, Norway, June 2001}.
\newblock Universitext. Springer Berlin Heidelberg, 2002.

\bibitem{Penrose197765}
Roger Penrose.
\newblock The twistor programme.
\newblock {\em Reports on Mathematical Physics}, 12(1):65 -- 76, 1977.

\bibitem{ward1991twistor}
R.S. Ward and R.O. Wells.
\newblock {\em Twistor Geometry and Field Theory}.
\newblock Cambridge Monographs on Mathematical Physics. Cambridge University
  Press, 1991.

\bibitem{dunajski2010solitons}
M.~Dunajski.
\newblock {\em Solitons, Instantons, and Twistors}.
\newblock Oxford Graduate Texts in Mathematics. OUP Oxford, 2010.

\bibitem{atiyah1979geometry}
M.F. Atiyah, Accademia nazionale~dei Lincei, and Scuola normale~superiore
  (Italy).
\newblock {\em Geometry of Yang-Mills fields}.
\newblock Lezioni fermiane. Scuola normale superiore, 1979.

\bibitem{10.2307/79638}
M.~F. Atiyah, N.~J. Hitchin, and I.~M. Singer.
\newblock Self-duality in four-dimensional riemannian geometry.
\newblock {\em Proceedings of the Royal Society of London. Series A,
  Mathematical and Physical Sciences}, 362(1711):425--461, 1978.

\bibitem{WITTEN1978394}
Edward Witten.
\newblock An interpretation of classical yang-mills theory.
\newblock {\em Physics Letters B}, 77(4):394 -- 398, 1978.

\bibitem{ISENBERG1978462}
James Isenberg, Philip~B. Yasskin, and Paul~S. Green.
\newblock Non-self-dual gauge fields.
\newblock {\em Physics Letters B}, 78(4):462 -- 464, 1978.

\bibitem{BELAVIN197585}
A.A. Belavin, A.M. Polyakov, A.S. Schwartz, and Yu.S. Tyupkin.
\newblock Pseudoparticle solutions of the yang-mills equations.
\newblock {\em Physics Letters B}, 59(1):85 -- 87, 1975.

\bibitem{ATIYAH1978185}
M.F. Atiyah, N.J. Hitchin, V.G. Drinfeld, and Yu.I. Manin.
\newblock Construction of instantons.
\newblock {\em Physics Letters A}, 65(3):185 -- 187, 1978.

\bibitem{0143-0807-5-1-007}
P~R Girard.
\newblock The quaternion group and modern physics.
\newblock {\em European Journal of Physics}, 5(1):25, 1984.

\bibitem{Atiyah1977}
M.~F. Atiyah and R.~S. Ward.
\newblock Instantons and algebraic geometry.
\newblock {\em Communications in Mathematical Physics}, 55(2):117--124, 1977.

\bibitem{8d0ab9a72ff7460e86eec538b76ce602}
{A. A.} Belavin and {Vladimir E} Zakharov.
\newblock Yang-mills equations as inverse scattering problem.
\newblock {\em Physics Letters, Section B: Nuclear, Elementary Particle and
  High-Energy Physics}, 73(1):53--57, 1 1978.

\end{thebibliography}

\end{document}